\documentclass[conference,letterpaper]{IEEEtran}
\addtolength{\topmargin}{9mm}
\usepackage[utf8]{inputenc}
\usepackage[T1]{fontenc}
\usepackage{ifthen}
\usepackage{cite}
\usepackage[cmex10]{amsmath} 
\usepackage{amssymb} 
\usepackage{mathrsfs}
\usepackage{amsthm}
\usepackage{graphicx}
\interdisplaylinepenalty=2500 

\newtheorem{theorem}{\bf Theorem}
\hyphenation{op-tical net-works semi-conduc-tor}

\begin{document}
\title{Unified Error Analysis for Synchronous and Asynchronous Two-User Random Access}


\author{%
  \IEEEauthorblockN{Nazanin Mirhosseini and Jie Luo}
  \IEEEauthorblockA{ECE Dept., Colorado State Univ., Fort Collins, CO 80523\\
                     Email:\{Nazanin.Mirhosseini, Jie.Luo\}@colostate.edu}
}

\maketitle


\begin{abstract}
We consider a two-user random access system in which each user independently selects a coding scheme from a finite set for every message, without sharing these choices with the other user or with the receiver. The receiver aims to decode only user~1’s message but may also decode user~2’s message when beneficial. In the synchronous setting, the receiver employs two parallel sub-decoders: one dedicated to decoding user~1’s message and another that jointly decodes both users’ messages. Their outputs are synthesized to produce the final decoding or collision decision. For the asynchronous setting, we examine a time interval containing $L$ consecutive codewords from each user. The receiver deploys $2^{2L}$ parallel sub-decoders, each responsible for decoding a subset of the message-code index pairs. In both synchronous and asynchronous cases, every sub-decoder partitions the coding space into three disjoint regions: operation, margin, and collision, and outputs either decoded messages or a collision report according to the region in which the estimated code index vector lies. Error events are defined for each sub-decoder and for the overall receiver whenever the expected output is not produced. We derive achievable upper bounds on the generalized error performance, defined as a weighted sum of incorrect-decoding, collision, and miss-detection probabilities, for both synchronous and asynchronous scenarios\footnote{This work was supported by the National Science Foundation under Grant ECCS-2128569. Any opinions, findings, and conclusions or recommendations expressed in this paper are those of the authors and do not necessarily reflect the views of the National Science Foundation.}.
\end{abstract}

\section{Introduction}
Random access communication manages uncoordinated and bursty transmissions by relying on pre-defined protocols rather than real-time \emph{online} coordination \cite{Bertsekas1992}, \cite{Ephremides1998}. A central challenge in such systems is that a receiver must be prepared to decode a message from any transmitter within a large population of legitimate users. If each user were assigned a unique codebook, identifying an active user would require prohibitively high computational complexity \cite{Gallager1985}, \cite{Polyanskiy2017}, \cite{Tang2018}. Practical systems therefore employ a limited set of pre-designed signaling schemes, enabling the receiver to decode a message without knowing the sender’s identity in advance, an idea formalized by the \emph{unsourced random access} (URA) model \cite{Polyanskiy2017}.

Although recent URA work has been motivated by supporting multi-packet reception (MPR), the standard URA model contains assumptions that are incompatible with this goal. First, URA assumes that user identification (ID) is unnecessary. However, under MPR, when multiple users transmit packet streams concurrently, the receiver cannot properly reassemble messages unless each packet embeds its user ID; this in turn forces distinct effective codebooks and contradicts the single-codebook premise of URA \cite{Ghez1988}, \cite{Ghez1989}. Second, URA relies on a single, fixed codebook for all users, which inherently limits the number of decodable parallel messages. A codebook optimized for a large number of users must operate at a low rate, leading to poor throughput when only a few users are active \cite{Luo2012}, \cite{Luo2015}. Consequently, any fixed coding scheme, if pre-specified in the physical-layer protocol, is poorly suited for the time-varying traffic conditions of random access.

To address these limitations, this paper investigates a \emph{semi-unsourced random access model}. In this framework, each active user independently selects a coding option from a finite, pre-shared ensemble for every message, without coordinating its choice with the receiver or other users. Code ensembles are unique across active users, enabling the receiver to distinguish active transmitters while allowing each user to dynamically adapt parameters such as rate or power based on sensed channel contention. The receiver, knowing the code ensembles but not the selected indices, must jointly identify the coding options and decode the messages of interest. This is achieved using a bank of parallel sub-decoders. In the synchronous scenario, these sub-decoders implement different decoding strategies for recovering the message of a target user (denoted as user~1), such as treating interference as noise or jointly decoding multiple users’ messages. The final decision combines their outputs to obtain user~1’s decoded message or to declare a collision.

We further extend the analysis to an asynchronous scenario to incorporate key practical factors in random access. Protocols such as Carrier Sense Multiple Access with Collision Avoidance (CSMA/CA) require users to sense the channel and postpone transmissions when it is busy \cite{Bianchi2000}, \cite{Luo2024}, leading to deliberate, staggered packet arrivals. Supporting such protocols requires a communication model that accommodates frame-asynchronous codeword transmissions. We therefore analyze a setting in which user transmissions are not aligned in time, ensuring that the proposed framework remains applicable to realistic system designs.

\section{Frame Synchronous Transmission} \label{Sec1}
Consider symbol and frame synchronous random multiple access communication over a memoryless channel with two users and one receiver. The channel is modeled by the conditional distribution $P_{Y|X_1, X_2}$, where $Y\in {\cal Y}$ is the channel output symbol, $X_1\in {\cal X}_1$ and $X_2\in {\cal X}_2$ are respectively the channel input symbols from the two users, with ${\cal Y}$, ${\cal X}_1$, and ${\cal X}_2$ being the corresponding alphabets. For notational convenience, define the input vector $\mbox{\boldmath $X$}=[X_1, X_2]^{\top}$ and denote the channel by $P_{Y|\mbox{\scriptsize \boldmath $X$}}$. Each user is equipped with an ensemble of random blockcodes. A code is denoted by $g=[P_g(X), r_g]$, where $P_g(X)$ specifies the input distribution and $r_g$ is the rate parameter. While codebooks are generated randomly, we assume that their pseudo-random generation algorithms are shared offline with the receiver. If the random seeds (e.g., time and temporary user ID) are available online, the receiver can reproduce the ensemble of codebooks used by the users. Time is slotted, with each slot spanning $N$ channel symbol durations. At the beginning of each time slot, each user selects a code to encode its message and transmits the corresponding codeword. The receiver, unaware of the users’ code selections, searches over possible code combinations and attempts to decode. If the resulting error probability is within a prescribed threshold, the receiver outputs the decoded messages; otherwise, it declares a \textit{collision}.

To be technically specific, consider communication in one time slot. Let $(w_1, g_1)$ and $(w_2, g_2)$ denote the message-code pairs of the two users, where $g_1\in {\cal G}_1$ and $g_2\in {\cal G}_2$. The corresponding codewords are $X^{(N)}_1(w_1, g_1)$ and $X^{(N)}_2(w_2, g_2)$. The receiver knows all randomly generated codebooks in ${\cal G}_1$ and ${\cal G}_2$, but is unaware of the transmitted messages or code selections. We assume that the receiver is only interested in decoding the message of user 1. Given the received channel output $Y^{(N)}$, the receiver either outputs an estimated message-code pair $(\hat{w}_1, \hat{g}_1)$ or declares a collision for user 1.

For convenience, define $\mbox{\boldmath $w$}=[w_1, w_2]^{\top}$, $\hat{\mbox{\boldmath $w$}}=[\hat{w}_1, \hat{w}_2]^{\top}$ and $\mbox{\boldmath $g$}=[g_1, g_2]^{\top}$, $\hat{\mbox{\boldmath $g$}}=[\hat{g}_1, \hat{g}_2]^{\top}$. The receiver partitions the space of coding vectors $\mbox{\boldmath $g$}$ into three disjoint regions: an \textit{operation region} ${\cal R}$, an \textit{operation margin} $\widehat{{\cal R}}$, and a \textit{collision region} ${\cal R}_c$. While the receiver does not know $\mbox{\boldmath $g$}$, if $\mbox{\boldmath $g$}\in {\cal R}$, the receiver attempts decoding; if $\mbox{\boldmath $g$}\in {\cal R}_c$, it attempts a collision report; and if $\mbox{\boldmath $g$}\in\widehat{{\cal R}}$, both correct decoding and collision report are acceptable. We define the conditional error probability of the system as
\begin{equation}
	P_e(\mbox{\boldmath $g$})=\left\{\begin{array}{l} \mathbb{P}\left[(\hat{w}_1, \hat{g}_1)\ne (w_1, g_1)| (\mbox{\boldmath $w$}, \mbox{\boldmath $g$})\right],\   \forall \mbox{\boldmath $g$} \in {\cal R} \\
	1- \mathbb{P}\left[\left.\begin{array}{l} \mbox{``collision'' or } \\ (\hat{w}_1, \hat{g}_1)= (w_1, g_1)\end{array}\right|(\mbox{\boldmath $w$}, \mbox{\boldmath $g$})\right], \\ \qquad \qquad \qquad \qquad \qquad \qquad \ \ \ \ \ \forall \mbox{\boldmath $g$} \in \widehat{{\cal R}} \\
	1- \mathbb{P}\left[\mbox{``collision''}|(\mbox{\boldmath $w$}, \mbox{\boldmath $g$})\right], \qquad \ \forall \mbox{\boldmath $g$} \in {\cal R}_c \end{array}. \right.
\end{equation}
Let $\{P(\mbox{\boldmath $g$})\}$ be a set of pre-determined weight parameters (interpreted as assumed prior probabilities) satisfying $P(\mbox{\boldmath $g$})\ge 0$ for all $\mbox{\boldmath $g$}$ and $\sum_{\mbox{\scriptsize \boldmath $g$}} P(\mbox{\boldmath $g$})=1$. We define the \emph{generalized error performance (GEP)} of the system as
\begin{equation}
 	\mathrm{GEP} =\sum_{\mbox{\scriptsize \boldmath $g$}}  P_e(\mbox{\boldmath $g$})P(\mbox{\boldmath $g$})= \mathbb{E}_{\mbox{\scriptsize \boldmath $g$}}[P_e(\mbox{\boldmath $g$})].
\label{ExtendedGEP}
\end{equation}

Although the receiver is concerned only with user 1's message, it may choose to jointly decode user 2's message if this aids decoding. Particularly, the receiver employs two sub-decoders in parallel:
\begin{itemize}
   \item $D_{12}$: jointly decodes the messages of both users~1 and~2,
   \item $D_{1}$: decodes only the message of user~1 without attempting to decode user~2.
\end{itemize}
The final decoding or collision decision is obtained by combining the outcomes of these two sub-decoders, if they are consistent.

\subsection{Sub-Decoder $D_{12}$}

Sub-decoder $D_{12}$ should set its operation region, operation margin and collision region, respectively denoted by $\mathcal{R}_{12}$, $\widehat{\mathcal{R}}_{12}$, and $\mathcal{R}_{12c}$ as follows
\begin{eqnarray}
	&& \mathcal{R}_{12}  \subseteq \mathcal{R}, \quad \widehat{\mathcal{R}}_{12} =  \widehat{\mathcal{R}} \cup (\mathcal{R} \backslash \mathcal{R}_{12}), \quad \mathcal{R}_{12c} = \mathcal{R}_{c}.
\end{eqnarray}
Given the actual coding vector $\mbox{\boldmath $g$}$, define the conditional error probability for sub-decoder $D_{12}$ as
\begin{equation}
	P_{e12}(\mbox{\boldmath $g$}):=\left\{\begin{array}{l} \mathbb{P}\left[\left. \begin{array}{l} (\hat{w}_1, \hat{g}_1)\ne (w_1, g_1) \mbox{ or} \\ (\hat{w}_2, \hat{g}_2)\ne (w_2, g_2) \end{array} \right| (\mbox{\boldmath $w$}, \mbox{\boldmath $g$})\right], \\  \qquad \qquad \qquad \qquad \qquad \qquad \qquad 	\forall \mbox{\boldmath $g$} \in {\cal R}_{12} \\
	1- \mathbb{P}\left[\left.\begin{array}{l} \mbox{``collision'' or } \\ (\hat{\mbox{\boldmath $w$}},\hat{\mbox{\boldmath $g$}})= (\mbox{\boldmath $w$},\mbox{\boldmath $g$})  \end{array}\right|	(\mbox{\boldmath $w$}, \mbox{\boldmath $g$})\right], \\ \qquad \qquad \qquad \qquad \qquad \qquad \qquad \forall \mbox{\boldmath $g$} \in \widehat{{\cal R}}_{12} \\
	1- \mathbb{P}\left[\mbox{``collision''}|(\mbox{\boldmath $w$}, \mbox{\boldmath $g$})\right], \qquad \ \ \ \forall \mbox{\boldmath $g$} \in {\cal R}_{12c} \end{array} \right. \label{DefP_e12}
\end{equation}
Consequently, the \emph{generalized error performance} for $D_{12}$ is defined as
\begin{eqnarray}
	\mathrm{GEP}_{12} := \sum_{\mbox{\scriptsize \boldmath $g$}\in  \mathcal{G}_{1}\times \mathcal{G}_{2}}P_{e12}(\mbox{\boldmath $g$})P(\mbox{\boldmath $g$}),\label{GEP12Def}
\end{eqnarray}

The following theorem establishes an achievable bound on $\mbox{GEP}_{12}$.
\begin{theorem}\label{GEP12}
Consider the sub-decoder $D_{12}$, which jointly decodes both users~$1$ and $2$ messages. The generalized error performance $\mbox{GEP}_{12}$ for this sub-decoder is upper bounded by
\begin{eqnarray}
	&& \mbox{GEP}_{12} \le \sum_{\mbox{\scriptsize \boldmath $g$} \in {\cal R}_{12}}\left(\sum_{\tilde{\mbox{\scriptsize \boldmath $g$}} \in {\cal R}_{12}, \tilde{g}_1=g_1} B_{i\{1\}}^{(D_{12})}(\tilde{\mbox{\boldmath $g$}},\mbox{\boldmath $g$}) \right. \nonumber \\
	&& +\sum_{\tilde{\mbox{\scriptsize \boldmath $g$}} \in {\cal R}_{12}, \tilde{g}_2=g_2} B_{i\{2\}}^{(D_{12})}(\tilde{\mbox{\boldmath $g$}},\mbox{\boldmath $g$}) + \sum_{\tilde{\mbox{\scriptsize \boldmath $g$}} \in {\cal R}_{12}} \ \ \ \ B_{i\{\}}^{(D_{12})}(\tilde{\mbox{\boldmath $g$}},\mbox{\boldmath $g$}) \nonumber \\
	&& + 2 \!\!\!\!\!\!\!\! \sum_{\tilde{\mbox{\scriptsize \boldmath $g$}} \in \widehat{{\cal R}}_{12}\cup {\cal R}_{12c}, \tilde{g}_1=g_1} \!\!\!\!\!\!\!\!  B_{mc\{1\}}^{(D_{12})}(\tilde{\mbox{\boldmath $g$}},\mbox{\boldmath $g$}) + 2 \!\!\!\!\!\!\!\!  \sum_{\tilde{\mbox{\scriptsize \boldmath $g$}} \in \widehat{{\cal R}}_{12}\cup {\cal R}_{12c}, \tilde{g}_2=g_2} \!\!\!\!\!\!\!\!  B_{mc\{2\}}^{(D_{12})}(\tilde{\mbox{\boldmath $g$}},\mbox{\boldmath $g$}) \nonumber \\
	&& \left. + 2 \!\!\!\!\!  \sum_{\tilde{\mbox{\scriptsize \boldmath $g$}} \in \widehat{{\cal R}}_{12}\cup {\cal R}_{12c}}  \!\!\!\!\!\!\!\! B_{mc\{\}}^{(D_{12})}(\tilde{\mbox{\boldmath $g$}},\mbox{\boldmath $g$})    \right) \nonumber \\
\label{2UserJointGEPBound}
\end{eqnarray}
Here, the bounds on incorrect decoding probabilities are
\begin{eqnarray}
	&& \!\!\!\!\!\!\!\!\!\!\!\!  B_{i\{1\}}^{(D_{12})}(\tilde{\mbox{\boldmath $g$}}, \mbox{\boldmath $g$})\! = \! P(\mbox{\boldmath $g$})\mathbb{E}\Bigg[\min \Bigg\{\frac{1}{|\mathcal{R}_{12g_1}|}, 2^{Nr_{\tilde{g}_{2}}}\times  \\
	&& \!\!\!\!\!\!\!\!\!\!\!\!\!  \mathbb{P}\! \left[\ell_{\tilde{\mbox{\scriptsize \boldmath $g$}}}\!\left(X_{1}^{(N)}\! ,\tilde{X}_{2}^{(N)}\! ,Y^{(N)}\! \right)\! >\!  \ell_{\mbox{\scriptsize \boldmath $g$}}\!\! \left(\mbox{\boldmath $X$}^{(N)}\! ,Y^{(N)}\! \right) \! \middle \vert  \mbox{\boldmath  $X$}^{(N)}\! , Y^{(N)}\! \right]\!\! \Bigg\}\! \Bigg],\nonumber \\
	&& \!\!\!\!\!\!\!\!\!\!\!\! B_{i\{2\}}^{(D_{12})}(\tilde{\mbox{\boldmath $g$}}, \mbox{\boldmath $g$})= P(\mbox{\boldmath $g$})\mathbb{E}\Bigg[\min \Bigg\{\frac{1}{|\mathcal{R}_{12g_2}|}, 2^{Nr_{\tilde{g}_{1}}}\times \\
	&& \!\!\!\!\!\!\!\!\!\!\!\!\!  \mathbb{P}\!\left[\ell_{\tilde{\mbox{\scriptsize \boldmath $g$}}}\!\left(\tilde{X}_{1}^{(N)}\! ,{X}_{2}^{(N)}\! ,Y^{(N)}\! \right)\! >\!  \ell_{\mbox{\scriptsize \boldmath $g$}}\!\! \left(\mbox{\boldmath $X$}^{(N)}\! ,Y^{(N)}\! \right) \! \middle \vert  \mbox{\boldmath  $X$}^{(N)}\! , Y^{(N)}\! \right]\!\! \Bigg\}\! \Bigg],\nonumber \\
	&& \!\!\!\!\!\!\!\!\!\!\!\!  B_{i\{\}}^{(D_{12})}(\tilde{\mbox{\boldmath $g$}}, \mbox{\boldmath $g$}) \! = \! P(\mbox{\boldmath $g$}) \mathbb{E}\Bigg[\min \Bigg\{ \frac{1}{| \mathcal{R}_{12}|}, 2^{N(r_{\tilde{g}_{1}}+r_{\tilde{g}_{2}})}  \times \\
	&& \!\!\!\!\!\!\!\!\!\!\!\!\!  \mathbb{P}\left[\ell_{\tilde{\mbox{\scriptsize \boldmath $g$}}}\! \left(\tilde{\mbox{\boldmath $X$}}^{(N)}\! , Y^{(N)}\! \right) \! >\! \ell_{\mbox{\scriptsize \boldmath $g$}}\! \left(\mbox{\boldmath $X$}^{(N)},Y^{(N)}\! \right) \! \middle  \vert \mbox{\boldmath $X$}^{(N)},Y^{(N)}\right]\!\! \Bigg\}\! \Bigg], \nonumber
\end{eqnarray}
and the bounds on miss-detection and collision probabilities are
\begin{eqnarray}
	&&  \!\!\!\!\!\!\!\!\!\!\!\!\!\!\!\! B_{mc\{1\}}^{(D_{12})}(\tilde{\mbox{\boldmath $g$}}, \mbox{\boldmath $g$})=  \\
	&& \!\!\!\!\!\!\!\!\!\!\!\!\! \min\Bigg\{ P(\mbox{\boldmath $g$}), P(\tilde{\mbox{\boldmath $g$}})2^{Nr_{g_2}} \mathbb{E}\Bigg[\frac{P_{\tilde{g}_{2}}\left(Y^{(N)}\middle \vert X_{1}^{(N)}\right)}{P\left(Y^{(N)}\middle \vert \mbox{\boldmath  $X$}^{(N)}\right)}\Bigg] \Bigg\},\nonumber \\
	&& \!\!\!\!\!\!\!\!\!\!\!\!\!\!\!\! B_{mc\{2\}}^{(D_{12})}(\tilde{\mbox{\boldmath $g$}}, \mbox{\boldmath $g$})=  \\
	&& \!\!\!\!\!\!\!\!\!\!\!\!\! \min\Bigg\{ P(\mbox{\boldmath $g$}), P(\tilde{\mbox{\boldmath $g$}})2^{Nr_{g_1}}\mathbb{E}\Bigg[\frac{P_{\tilde{g}_{1}}\left(Y^{(N)}\middle \vert X_{2}^{(N)}\right)}{P\left(Y^{(N)}\middle \vert \mbox{\boldmath  $X$}^{(N)}\right)}\Bigg] \Bigg\},\nonumber \\
	&& \!\!\!\!\!\!\!\!\!\!\!\!\!\!\!\! B_{mc\{\}}^{(D_{12})}(\tilde{\mbox{\boldmath $g$}}, \mbox{\boldmath $g$})=  \\
	&& \!\!\!\!\!\!\!\!\!\!\!\!\! \min \Bigg\{ P(\mbox{\boldmath $g$}), P(\tilde{\mbox{\boldmath $g$}}) 2^{N(r_{g_1}+r_{g_2})} \mathbb{E}\Bigg[\frac{P_{\tilde{\mbox{\scriptsize \boldmath $g$}}}\left(Y^{(N)}\right)}{P\left(Y^{(N)}\middle \vert \mbox{\boldmath $X$}^{(N)}\right)}\Bigg]\! \Bigg\},\nonumber
\end{eqnarray}
where $|.|$ is the cardinality of a set, the likelihood is denoted by
\begin{eqnarray}
	\ell_{\mbox{\scriptsize \boldmath $g$}}\! \left(\mbox{\boldmath  $X$}^{(N)},Y^{(N)}\right)\! =  \! P\! \left(Y^{(N)}\middle \vert \mbox{\boldmath $X$}^{(N)}\! \right)\! P(\mbox{\boldmath $g$}) 2^{-N(r_{g_1}+r_{g_2})}, \nonumber
\end{eqnarray}
and $\mathcal{R}_{12g_1} = \{ \tilde{\mbox{\boldmath $g$}} \in \mathcal{R}_{12} : \tilde{g}_{1} = g_1\}$, $\mathcal{R}_{12g_2} = \{ \tilde{\mbox{\boldmath $g$}} \in \mathcal{R}_{12} : \tilde{g}_{2} = g_2\}$.
\end{theorem}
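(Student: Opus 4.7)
The plan is to decompose $\mbox{GEP}_{12}$ according to which region the true coding vector $\mbox{\boldmath $g$}$ lies in, and then attack the two resulting pieces separately: for $\mbox{\boldmath $g$}\in{\cal R}_{12}$ the only error mode is incorrect decoding, whereas for $\mbox{\boldmath $g$}\in\widehat{{\cal R}}_{12}\cup{\cal R}_{12c}$ the error is that the decoder nevertheless emits a valid decoded pair with $\hat{\mbox{\boldmath $g$}}\in{\cal R}_{12}$ instead of declaring a collision. Both cases reduce, through the MAP-style rule induced by the posterior-proportional likelihood $\ell_{\mbox{\scriptsize \boldmath $g$}}$, to pairwise likelihood events of the form $\ell_{\tilde{\mbox{\scriptsize \boldmath $g$}}}(\tilde{\mbox{\boldmath $X$}}^{(N)},Y^{(N)})>\ell_{\mbox{\scriptsize \boldmath $g$}}(\mbox{\boldmath $X$}^{(N)},Y^{(N)})$, so the entire argument is a Gallager-style union bound on this event followed by averaging over the random codebooks.

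For the first piece ($\mbox{\boldmath $g$}\in{\cal R}_{12}$), I would union-bound over the competitor $(\tilde{\mbox{\boldmath $w$}},\tilde{\mbox{\boldmath $g$}})$, splitting into three slices: (i) $\tilde{g}_1=g_1$ with $\tilde{w}_1=w_1$ fixed, which leaves roughly $2^{Nr_{\tilde{g}_2}}$ wrong user-2 messages per $\tilde{\mbox{\boldmath $g$}}$; (ii) the symmetric case $\tilde{g}_2=g_2,\tilde{w}_2=w_2$; and (iii) both coordinates wrong, contributing about $2^{N(r_{\tilde{g}_1}+r_{\tilde{g}_2})}$ messages. In each slice the expected number of competitors that beat the true pair in likelihood is the message count times the conditional pairwise probability $\mathbb{P}[\ell_{\tilde{\mbox{\scriptsize \boldmath $g$}}}(\cdot)>\ell_{\mbox{\scriptsize \boldmath $g$}}(\cdot)\mid\mbox{\boldmath $X$}^{(N)},Y^{(N)}]$. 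Truncating each of these expected counts by the slice-capacity $1/|{\cal R}_{12g_1}|$, $1/|{\cal R}_{12g_2}|$, or $1/|{\cal R}_{12}|$ keeps the per-$\mbox{\boldmath $g$}$ union bound automatically below $1$ after summing over $\tilde{\mbox{\boldmath $g$}}$; weighting by $P(\mbox{\boldmath $g$})$ and taking the remaining expectation over $(\mbox{\boldmath $X$}^{(N)},Y^{(N)})$ delivers precisely the three $B_{i\{\cdot\}}^{(D_{12})}$ terms.

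For the second piece the same pairwise event governs the error, but with the roles of true and competing $\mbox{\boldmath $g$}$ reversed. I plan to rename the dummy indices $(\mbox{\boldmath $g$},\tilde{\mbox{\boldmath $g$}})\mapsto(\tilde{\mbox{\boldmath $g$}},\mbox{\boldmath $g$})$, so that after the swap the outer sum runs over $\mbox{\boldmath $g$}\in{\cal R}_{12}$ and the inner over $\tilde{\mbox{\boldmath $g$}}\in\widehat{{\cal R}}_{12}\cup{\cal R}_{12c}$, matching the form in the theorem. The pairwise probability is then handled by a Markov-type bound $\mathbb{P}[\ell_{\tilde{\mbox{\scriptsize \boldmath $g$}}}\ge\ell_{\mbox{\scriptsize \boldmath $g$}}]\le\mathbb{E}[\ell_{\tilde{\mbox{\scriptsize \boldmath $g$}}}/\ell_{\mbox{\scriptsize \boldmath $g$}}]$; inserting the closed form $\ell_{\mbox{\scriptsize \boldmath $g$}}=P(Y^{(N)}\mid\mbox{\boldmath $X$}^{(N)})P(\mbox{\boldmath $g$})2^{-N(r_{g_1}+r_{g_2})}$ produces the $P(\tilde{\mbox{\boldmath $g$}})2^{Nr_{g_i}}$ prefactors, and integrating out the fresh random codebook entries of the undecoded competitor converts $P(Y^{(N)}\mid\tilde{\mbox{\boldmath $X$}}^{(N)})$ into $P_{\tilde{g}_2}(Y^{(N)}\mid X_1^{(N)})$, $P_{\tilde{g}_1}(Y^{(N)}\mid X_2^{(N)})$, or the fully marginal $P_{\tilde{\mbox{\scriptsize \boldmath $g$}}}(Y^{(N)})$, matching the three $B_{mc\{\cdot\}}^{(D_{12})}$ integrands. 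Capping each contribution by $P(\mbox{\boldmath $g$})$ (since every conditional error probability is at most $1$) yields the $\min\{P(\mbox{\boldmath $g$}),\cdot\}$ form, and the factor of $2$ absorbs the two equivalent ways — via the original and the relabeled index — in which a pair $(\mbox{\boldmath $g$},\tilde{\mbox{\boldmath $g$}})$ straddling the boundary of ${\cal R}_{12}$ can be counted.

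The main obstacle I anticipate is the combinatorial bookkeeping around this relabeling: one must keep the three-way slicing $\{\tilde{g}_1=g_1\},\{\tilde{g}_2=g_2\},\{\tilde{g}_i\ne g_i\ \forall i\}$ intact through the swap while correctly assigning the factor of $2$, and must justify carefully that averaging over the fresh randomness of the competing codebook is what produces the marginal distributions $P_{\tilde{g}_i}(Y^{(N)}\mid X_{3-i}^{(N)})$ appearing in the $B_{mc\{\cdot\}}^{(D_{12})}$ expressions. The remaining ingredients — the union bound, the $\min\{1/|{\cal R}_{12g_i}|,\cdot\}$ and $\min\{P(\mbox{\boldmath $g$}),\cdot\}$ truncations, and Markov's inequality applied to the likelihood ratio — are routine once this counting is pinned down.
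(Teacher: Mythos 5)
Your plan for the incorrect-decoding terms is essentially the paper's argument (random-coding union bound with the per-competitor truncation by $1/|\mathcal{R}_{12g_1}|$, $1/|\mathcal{R}_{12g_2}|$, $1/|\mathcal{R}_{12}|$) and it goes through. The genuine gap is in the $B_{mc\{\cdot\}}^{(D_{12})}$ terms. You analyze a threshold-free arg-max rule and reduce both error directions to the pairwise event $\ell_{\tilde{g}}(\tilde{X},Y)>\ell_{g}(X,Y)$, handled by Markov's inequality on the likelihood ratio. For the direction where the true vector is $g\in\mathcal{R}_{12}$ and the competitor lies in $\widehat{\mathcal{R}}_{12}\cup\mathcal{R}_{12c}$, this indeed reproduces, e.g., $P(\tilde{g})2^{Nr_{g_2}}\mathbb{E}\big[P_{\tilde{g}_2}(Y^{(N)}|X_1^{(N)})/P(Y^{(N)}|X_1^{(N)},X_2^{(N)})\big]$ with the expectation under the measure generated by $g$. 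But for the reverse (miss-detection) direction, where the true vector is $\tilde{g}\in\widehat{\mathcal{R}}_{12}\cup\mathcal{R}_{12c}$ and the competitor is $g\in\mathcal{R}_{12}$, the same computation yields the mirrored quantity $P(g)\,2^{Nr_{\tilde{g}_2}}\,\mathbb{E}_{\tilde{g}}\big[P_{g_2}(Y^{(N)}|X_1^{(N)})/P(Y^{(N)}|\cdot)\big]$: the prior weights and rate exponents are interchanged and, crucially, the expectation is taken under the measure generated by $\tilde{g}$, not by $g$; moreover the trivial cap for that direction is $P(\tilde{g})$, not $P(g)$. This mirrored expression is not $B_{mc\{1\}}^{(D_{12})}(\tilde{g},g)$ and need not be bounded by it, so your claim that the factor of $2$ merely "absorbs the two ways a boundary-straddling pair is counted" does not hold: the two contributions are not both dominated by the same $\min\{P(g),\cdot\}$ term, and their sum is not $2B_{mc}(\tilde{g},g)$.

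The paper obtains the stated form by a genuinely different device that your plan omits: the sub-decoder is a threshold-test decoder (constraint sets $\mathcal{S}_{\{1\}}^{(D_{12})},\mathcal{S}_{\{2\}}^{(D_{12})},\mathcal{S}_{\{\}}^{(D_{12})}$ with thresholds $\gamma_{\{1\}},\gamma_{\{2\}},\gamma_{\{\}}$, collision declared when their intersection is empty). The collision event for $g\in\mathcal{R}_{12}$ (true candidate fails a threshold) and the miss-detection event for $g\in\widehat{\mathcal{R}}_{12}\cup\mathcal{R}_{12c}$ (a competitor in $\mathcal{R}_{12}$ exceeds the threshold) are comparisons against $\gamma$, not against the true candidate's likelihood; after index swapping they are rewritten under a common measure via a change of measure (Prop.~18.3 of the cited Polyanskiy--Wu text), and only after choosing the optimal thresholds, e.g. $\gamma_{\{1\}}^{*}=2^{-Nr_{g_1}}P(\tilde{g})P_{\tilde{g}_2}(Y^{(N)}|X_1^{(N)})$, are both contributions individually bounded by the same $\min\{P(g),\cdot\}$ expression, which is the actual source of the factor $2$. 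Without thresholds there is nothing to optimize and no common measure, so pairwise Markov alone does not deliver the theorem. (A smaller related slip: your opening decomposition asserts that for $g\in\mathcal{R}_{12}$ the only error mode is incorrect decoding, but a declared collision in the operation region is also an error, and it is precisely this event that must be paired with miss-detection to produce the $B_{mc}$ terms.)
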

\begin{proof}
	The proof of Theorem \ref{GEP12} is provided in Appendix \ref{AppGEP12}.
\end{proof}

\subsection{Sub-Decoder $D_1$}

We now apply the same argument used for sub-decoder $D_{12}$ to sub-decoder $D_1$, considering its decoding region to be defined as follows
\begin{eqnarray}
	&&\mathcal{R}_{1} \subseteq \mathcal{R}  \backslash \mathcal{R}_{12}, \quad \widehat{\mathcal{R}}_{1} = \widehat{\mathcal{R}}\cup \mathcal{R}_{12}, \quad \mathcal{R}_{1c}  = \mathcal{R}_{c}.
\end{eqnarray}
Note that operations regions $\mathcal{R}_{12}$ and $\mathcal{R}_{1}$ form a partition of $\mathcal{R}$. Given the actual coding vector $\mbox{\boldmath $g$}$, the conditional error probability for $D_1$ is defined as
\begin{equation}
	P_{e1}(\mbox{\boldmath $g$})=\left\{\begin{array}{l} \mathbb{P}\left[\left. (\hat{w}_1, \hat{g}_1)\ne (w_1, g_1) \right| (\mbox{\boldmath $w$}, \mbox{\boldmath $g$})\right],  \ \forall \mbox{\boldmath $g$}\in {\cal R}_{1} \\
	1- \mathbb{P}\left[\left.\begin{array}{l} \mbox{``collision'' or } \\ (\hat{w}_1, \hat{g}_1)= (w_1, g_1) \end{array}\right|(\mbox{\boldmath $w$}, \mbox{\boldmath $g$})\right], \\ \qquad \qquad \qquad \qquad \qquad \qquad \qquad \forall \mbox{\boldmath $g$} \in \widehat{{\cal R}}_{1} \\
	1- \mathbb{P}[\mbox{``collision''}|(\mbox{\boldmath $w$}, \mbox{\boldmath $g$})], \ \ \ \ \ \  \ \ \ \ \forall \mbox{\boldmath $g$} \in {\cal R}_{1c} \end{array} \right.
\end{equation}
The \emph{generalized error performance} for sub-decoder $D_1$ is defined analogous to (\ref{GEP12Def}) as
\begin{eqnarray}
	\mathrm{GEP}_{1} := \sum_{\mbox{\scriptsize \boldmath $g$}\in \mathcal{G}_{1} \times \mathcal{G}_{2}} P_{e1}(\mbox{\boldmath $g$})P(\mbox{\boldmath $g$}).
\end{eqnarray}

The following theorem provides an achievable bound on $\mbox{GEP}_{1}$.
\begin{theorem}\label{GEP1}
Consider sub-decoder $D_1$, which only decodes the message of user~$1$ without attempting to decode user~$2$'s message. The generalized error performance $\mbox{GEP}_{1}$ of this sub-decoder is upper bounded by
\begin{eqnarray}
	&& \mbox{GEP}_{1} \le \sum_{\mbox{\scriptsize \boldmath $g$} \in {\cal R}_{1}}\left(\sum_{\tilde{\mbox{\scriptsize \boldmath $g$}} \in {\cal R}_{1}} B_{i\{\}}^{(D_1)}(\tilde{\mbox{\boldmath $g$}}, \mbox{\boldmath $g$}) \right. \nonumber \\
	&& + 2  \!\!\!\!\!  \sum_{\tilde{\mbox{\scriptsize \boldmath $g$}} \in \widehat{{\cal R}}_{1}\cup {\cal R}_{1c}}  \!\!\!\!\!\!\!\!  B_{mc\{\}}^{(D_1)}(\tilde{\mbox{\boldmath $g$}},\mbox{\boldmath $g$})  \left. +2   \!\!\!\!\! \sum_{\tilde{\mbox{\scriptsize \boldmath $g$}} \in {\cal R}_{1c}, \tilde{g}_1=g_1}  \!\!\!\!\!\!\!\! B_{mc\{1\}}^{(D_1)}(\tilde{\mbox{\boldmath $g$}},\mbox{\boldmath $g$}) \right).
\label{2UserJointGEPBound}
\end{eqnarray}
Here, the bound on the probability of incorrect decoding is
\begin{eqnarray}
	&& \!\!\!\!\!\!\! B_{i\{\}}^{(D_1)}(\tilde{\mbox{\boldmath $g$}}, \mbox{\boldmath $g$})= P(\mbox{\boldmath $g$}) \mathbb{E}\Bigg[\min \Bigg\{ \frac{1}{|\mathcal{R}_{1}|}, 2^{Nr_{\tilde{g}}}\times \\
	&& \!\!\!\!\!\!\!\! \mathbb{P}\left[ \ell_{\tilde{\mbox{\scriptsize \boldmath $g$}}}\left(\tilde{X}_{1}^{(N)},Y^{(N)}\right)\! > \! \ell_{\mbox{\scriptsize \boldmath $g$}}\left({X}_{1}^{(N)},Y^{(N)}\right)  \middle \vert X_{1}^{(N)},  Y^{(N)}\right]\bigg\}\Bigg],\nonumber
\end{eqnarray}
and the bounds on miss-detection and collision probabilities are
\begin{eqnarray}
	&& \!\!\!\!\!\!\!\!\!\!\! B_{mc\{\}}^{(D_1)}(\tilde{\mbox{\boldmath $g$}}, \mbox{\boldmath $g$})= \\
	&& \!\!\!\!\!\!\!\!\!\!\!  \min \Bigg\{ P(\mbox{\boldmath $g$}), P(\tilde{\mbox{\boldmath $g$}}) 2^{Nr_{g_1}}\mathbb{E}\Bigg[\frac{P_{\tilde{\mbox{\scriptsize \boldmath $g$}}}\left(Y^{(N)}\right)}{P\left(Y^{(N)}\middle \vert X_{1}^{(N)},g_2\right)}\Bigg]\Bigg\},\nonumber \\
	&& \!\!\!\!\!\!\!\!\!\!\!\! B_{mc\{1\}}^{(D_1)}(\tilde{\mbox{\boldmath $g$}}, \mbox{\boldmath $g$}) = \\
    && \!\!\!\!\!\!\!\!\!\!\! \min \Bigg\{ P(\mbox{\boldmath $g$}), P(\tilde{\mbox{\boldmath $g$}}) \mathbb{E}\Bigg[\frac{P_{\tilde{\mbox{\scriptsize \boldmath $g$}}}\left(Y^{(N)}\middle \vert X_{1}^{(N)}\right)}{P\left(Y^{(N)}\middle \vert X_{1}^{(N)},g_2\right)}\Bigg]\Bigg\}, \nonumber
\end{eqnarray}
where $|.|$ denotes the cardinality of a set, and
\begin{eqnarray}
	\ell_{\mbox{\scriptsize \boldmath $g$}}\left(X_1^{(N)},Y^{(N)}\right) = P\left(Y^{(N)} \middle  \vert X_1^{(N)},g_2\right)P(\mbox{\boldmath $g$})2^{-Nr_{g_1}}. \nonumber
\end{eqnarray}
\begin{proof}
	The proof of Theorem \ref{GEP1} is provided in Appendix \ref{AppGEP1}.
\end{proof}
\end{theorem}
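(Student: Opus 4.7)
The plan is to mirror the proof of Theorem~\ref{GEP12}, specialised to the situation where sub-decoder $D_1$ only recovers $(w_1,g_1)$ and therefore treats user~$2$'s codeword as random interference drawn from $P_{g_2}(X)$ under each hypothesised code $g_2$. I would begin by splitting the sum defining $\mbox{GEP}_1$ according to the partition $\mathcal{R}_1\cup\widehat{\mathcal{R}}_1\cup\mathcal{R}_{1c}$, handling the incorrect-decoding contribution on $\mathcal{R}_1$ separately from the miss-detection/collision contributions on $\widehat{\mathcal{R}}_1\cup\mathcal{R}_{1c}$.

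Next I would realise $D_1$ as a maximum-likelihood rule over the joint space of $(\tilde{w}_1,\tilde{\mbox{\boldmath $g$}})$, using the marginalised likelihood $\ell_{\tilde{\mbox{\boldmath $g$}}}(\tilde X_1^{(N)},Y^{(N)})=P(Y^{(N)}\mid \tilde X_1^{(N)},\tilde g_2)P(\tilde{\mbox{\boldmath $g$}})2^{-Nr_{\tilde g_1}}$, and commit to decoding or collision according to the region containing the maximiser $\hat{\mbox{\boldmath $g$}}$. An error on $\mathcal{R}_1$ then requires some competitor $(\tilde{w}_1,\tilde{\mbox{\boldmath $g$}})$ with $\tilde{\mbox{\boldmath $g$}}\in\mathcal{R}_1$ and $(\tilde{w}_1,\tilde g_1)\ne(w_1,g_1)$ to attain strictly larger likelihood. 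A union bound over $\tilde{\mbox{\boldmath $g$}}\in\mathcal{R}_1$ together with the $2^{Nr_{\tilde g_1}}$ rival messages for each fixed $\tilde{\mbox{\boldmath $g$}}$, combined with the same RCU-style truncation $\min\{1/|\mathcal{R}_1|,\,2^{Nr_{\tilde g_1}}\mathbb{P}[\ell_{\tilde{\mbox{\boldmath $g$}}}>\ell_{\mbox{\boldmath $g$}}\mid X_1^{(N)},Y^{(N)}]\}$ used for the analogous term in Theorem~\ref{GEP12}, yields the $B^{(D_1)}_{i\{\}}$ contribution directly; the absence here of the $\{1\}$ and $\{2\}$ sub-cases present in Theorem~\ref{GEP12} simply reflects the fact that $D_1$ does not track user~$2$'s message.

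For the miss-detection/collision part I would bound the probability that the maximiser $\hat{\mbox{\boldmath $g$}}$ falls in $\mathcal{R}_1$ (forcing a decoding output) while the true $\mbox{\boldmath $g$}$ lies in $\widehat{\mathcal{R}}_1\cup\mathcal{R}_{1c}$, using a change-of-measure (importance-sampling) step that rewrites the pairwise probability as the expectation of a likelihood ratio under the true distribution. The numerator $P_{\tilde{\mbox{\boldmath $g$}}}(Y^{(N)})$ or $P_{\tilde{\mbox{\boldmath $g$}}}(Y^{(N)}\mid X_1^{(N)})$ arises from averaging over the spurious codeword distribution under the hypothesised $\tilde{\mbox{\boldmath $g$}}$; the denominator $P(Y^{(N)}\mid X_1^{(N)},g_2)$ matches $D_1$'s marginalised channel; and the multiplicity factor $2^{Nr_{g_1}}$, respectively $1$, comes from the union bound over competing user-$1$ messages when $\tilde g_1\ne g_1$, respectively $\tilde g_1=g_1$, in which case the true $X_1^{(N)}$ is reused and only the spurious $\tilde g_2$ has to be controlled. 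The outer truncation $\min\{P(\mbox{\boldmath $g$}),\cdot\}$ is the trivial bound that any single hypothesis contributes at most its own prior weight, and the factor $2$ multiplying each miss-collision sum absorbs the combined contribution of $\widehat{\mathcal{R}}_1$ and $\mathcal{R}_{1c}$ exactly as in Theorem~\ref{GEP12}.

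The main obstacle is the bookkeeping in the miss-detection change-of-measure. Unlike in Theorem~\ref{GEP12}, where the joint likelihood $\ell_{\mbox{\boldmath $g$}}(\mbox{\boldmath $X$}^{(N)},Y^{(N)})$ is a clean product over both users, here $\ell_{\mbox{\boldmath $g$}}(X_1^{(N)},Y^{(N)})$ contains the partially marginalised channel $P(Y^{(N)}\mid X_1^{(N)},g_2)$ that still depends on the hypothesised $g_2$. Separating the two sub-cases $\tilde g_1=g_1$ and $\tilde g_1\ne g_1$ so that the multiplicity factors and the numerator likelihood ratios match the forms in $B^{(D_1)}_{mc\{\}}$ and $B^{(D_1)}_{mc\{1\}}$, while keeping the prior weights $P(\mbox{\boldmath $g$})$ and $P(\tilde{\mbox{\boldmath $g$}})$ attached to the correct roles (true versus winning hypothesis), is where the most careful accounting is required.
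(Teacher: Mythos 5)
Your treatment of the incorrect-decoding term is fine and coincides with the paper's RCU step, but the collision/miss-detection part has a genuine gap: the decoder you analyze is not one whose error analysis produces the stated bound. You commit to a pure maximum-likelihood rule and decide ``decode vs.\ collision'' by the region containing the maximizer, and you then rewrite the pairwise miss-detection probability ``as the expectation of a likelihood ratio under the true distribution.'' In the paper, $D_1$ is instead a threshold-test decoder: it forms constraint sets $\mathcal{S}_{\{\}}^{(D_1)}$ and $\mathcal{S}_{\{1\}}^{(D_1)}$ by comparing $\ell_{\mbox{\scriptsize \boldmath $g$}}(X_1^{(N)},Y^{(N)})$ against thresholds $\gamma_{\{\}}(\tilde{\mbox{\boldmath $g$}},Y^{(N)})$ and $\gamma_{\{1\}}(\tilde{\mbox{\boldmath $g$}},X_1^{(N)},Y^{(N)})$ indexed by competitors $\tilde{\mbox{\boldmath $g$}}$, and only maximizes over the intersection. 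The collision event (true $\mbox{\boldmath $g$}\in\mathcal{R}_1$, likelihood below $\gamma$) and the miss-detection event (true code in $\widehat{\mathcal{R}}_1\cup\mathcal{R}_{1c}$, a spurious $\mathcal{R}_1$ candidate above $\gamma$) are then the two error types of a binary test whose statistic depends only on $(X_1^{(N)},Y^{(N)})$; after index swapping and the change of measure of \cite[Prop.~18.3]{IT2025}, both are written as expectations under the \emph{same} measure (the one in which $(X_1^{(N)},Y^{(N)})$ is generated under the operation-region code $\mbox{\boldmath $g$}$), and the optimal thresholds $\gamma^{*}_{\{\}}=P(\tilde{\mbox{\boldmath $g$}})P_{\tilde{\mbox{\scriptsize \boldmath $g$}}}(Y^{(N)})$ and $\gamma^{*}_{\{1\}}=2^{-Nr_{g_1}}P(\tilde{\mbox{\boldmath $g$}})P_{\tilde{\mbox{\scriptsize \boldmath $g$}}}(Y^{(N)}|X_1^{(N)})$ make each term at most the stated $\min\{P(\mbox{\boldmath $g$}),\cdot\}$; this is precisely where the factor $2$ and the exact forms of $B^{(D_1)}_{mc\{\}}$ and $B^{(D_1)}_{mc\{1\}}$ come from. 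With your argmax rule the implicit threshold is the random likelihood of the true candidate, and your union bound plus likelihood-ratio rewrite under the true (margin/collision) code yields miss-detection terms of the mirrored form $\min\{P(\tilde{\mbox{\boldmath $g$}}),\,P(\mbox{\boldmath $g$})2^{Nr_{\tilde{g}_1}}\mathbb{E}_{\tilde{\mbox{\scriptsize \boldmath $g$}}}[P_{\mbox{\scriptsize \boldmath $g$}}(Y^{(N)})/P(Y^{(N)}|X_1^{(N)},\tilde{g}_2)]\}$, with the priors, the rate exponent, and the averaging measure all swapped relative to $B^{(D_1)}_{mc\{\}}(\tilde{\mbox{\boldmath $g$}},\mbox{\boldmath $g$})$. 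That mirrored sum is not in general dominated by the claimed bound, so the plan as written does not reach the stated inequality.

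A second, related misreading: you describe $B^{(D_1)}_{mc\{\}}$ and $B^{(D_1)}_{mc\{1\}}$ as a partition of competitors into $\tilde{g}_1\ne g_1$ (multiplicity $2^{Nr_{g_1}}$) versus $\tilde{g}_1=g_1$ (multiplicity $1$), and you attribute the factor $2$ to absorbing $\widehat{\mathcal{R}}_1$ and $\mathcal{R}_{1c}$. In the theorem, the $B^{(D_1)}_{mc\{\}}$ sum runs over \emph{all} $\tilde{\mbox{\boldmath $g$}}\in\widehat{\mathcal{R}}_1\cup\mathcal{R}_{1c}$, including those with $\tilde{g}_1=g_1$, while $B^{(D_1)}_{mc\{1\}}$ is an \emph{additional} contribution (restricted to $\mathcal{R}_{1c}$ with $\tilde{g}_1=g_1$) that exists because the decoder runs a second threshold test conditioned on $X_1^{(N)}$; each test contributes its own collision term on $\mathcal{R}_1$ and its own miss-detection term, and the factor $2$ reflects the pairing of those two error types under the common $\min$ bound, not the union of the two regions. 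To repair the argument you would need to adopt the threshold-test decoder (or an equivalent device), perform the index swap so that each pair $(\mbox{\boldmath $g$},\tilde{\mbox{\boldmath $g$}})$ is indexed by the operation-region code and the margin/collision code respectively, and invoke the hypothesis-testing optimization of the thresholds; without these ingredients the stated expressions cannot be obtained.
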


\subsection{Ultimate Output of Receiver}
\label{UltimateOutput}
The receiver's final decision is determined by combining the outputs from sub-decoders $D_1$ and $D_{12}$ based on four possible scenarios. If both sub-decoders agree on the same message-code index pair for user~$1$, the receiver outputs this pair, $(\hat{w}_{1},\hat{g}_{1})$. Conversely, if both output a message-code index pair but they are different, the receiver reports a collision. In the event that only one sub-decoder outputs a message-code index pair while the other reports a collision, the receiver adopts the non-collision output. Finally, if both sub-decoders report a collision, the receiver's ultimate output is also a collision.

\begin{theorem}\label{THM}
	Consider the described receiver containing sub-decoders $D_{12}$ and $D_1$ with respective generalized error performance $\mathrm{GEP}_{12}$ and $\mathrm{GEP}_{1}$. The system's generalized error performance $\mathrm{GEP}$ is then bounded by
	\begin{equation}
		\mathrm{GEP} \leq \max_{\substack{\mathcal{R}_{1},\mathcal{R}_{12} \\ \mathrm{s.t.} \ \mathcal{R}_{1}\cup \mathcal{R}_{12}=\mathcal{R}, \mathcal{R}_{1}\cap \mathcal{R}_{12}=\emptyset}} \left(\mathrm{GEP}_{1} + \mathrm{GEP}_{12}\right).
	\end{equation}
\end{theorem}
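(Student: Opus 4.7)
The plan is to prove the pointwise bound $P_e(\mathbf{g}) \le P_{e1}(\mathbf{g}) + P_{e12}(\mathbf{g})$ for every $\mathbf{g}$ and every admissible partition $(\mathcal{R}_1,\mathcal{R}_{12})$ of $\mathcal{R}$; summing against the weights $P(\mathbf{g})$ then yields $\mathrm{GEP} \le \mathrm{GEP}_1 + \mathrm{GEP}_{12}$ for each such partition, which in particular implies the stated bound $\mathrm{GEP} \le \max_{\mathcal{R}_1,\mathcal{R}_{12}}(\mathrm{GEP}_1 + \mathrm{GEP}_{12})$.

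The pointwise inequality is the union bound associated with the statement ``if neither sub-decoder is in error, then the overall receiver is not in error.'' To establish this statement I would stratify the coding vectors by the four disjoint regions to which they can belong: $\mathcal{R}_{12}$, $\mathcal{R}_1$, $\widehat{\mathcal{R}}$, or $\mathcal{R}_c$. The relations $\widehat{\mathcal{R}}_1 = \widehat{\mathcal{R}} \cup \mathcal{R}_{12}$, $\widehat{\mathcal{R}}_{12} = \widehat{\mathcal{R}} \cup \mathcal{R}_1$, and $\mathcal{R}_{1c} = \mathcal{R}_{12c} = \mathcal{R}_c$ determine, for each region, the set of sub-decoder outputs that count as ``not an error'' under the definitions of $P_{e1}$ and $P_{e12}$. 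In each case I would enumerate every pair of such permitted outputs, feed it through the combining rule of Section~\ref{UltimateOutput}, and verify that the resulting system output is among the outputs allowed by $P_e(\mathbf{g})$ in that region. For $\mathbf{g}\in\mathcal{R}_{12}$, for instance, $D_{12}$ must output $(\mathbf{w},\mathbf{g})$ while $D_1$ must output either $(w_1,g_1)$ or ``collision,'' and the two resulting combinations both yield the correct system output $(w_1,g_1)$; the three remaining regions are handled by analogous short tables.

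The main obstacle is the bookkeeping in the region $\widehat{\mathcal{R}}$, where each of the two sub-decoders independently has two acceptable outputs: either the correct decoded pair or a collision report. Each of the four joint outcomes must be shown to produce a system output in $\{(w_1,g_1),\text{``collision''}\}$, the acceptable set for $\widehat{\mathcal{R}}$, so that the combining rule never converts a pair of individually acceptable sub-decoder outputs into an incorrect message-code estimate or into an unwarranted collision declaration. Once this case check is complete, taking conditional expectation over $(\mathbf{w},\mathbf{g})$ produces the pointwise bound, and summing against $P(\mathbf{g})$ delivers the theorem.
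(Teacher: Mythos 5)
Your proposal is correct. The paper itself omits the proof of Theorem~\ref{THM}, so there is nothing to compare against, but your argument is exactly the natural one: the case check over the four regions $\mathcal{R}_{12}$, $\mathcal{R}_{1}$, $\widehat{\mathcal{R}}$, $\mathcal{R}_{c}$, using $\widehat{\mathcal{R}}_{1}=\widehat{\mathcal{R}}\cup\mathcal{R}_{12}$, $\widehat{\mathcal{R}}_{12}=\widehat{\mathcal{R}}\cup\mathcal{R}_{1}$ and $\mathcal{R}_{1c}=\mathcal{R}_{12c}=\mathcal{R}_{c}$, shows that whenever neither sub-decoder errs the combining rule of Section~\ref{UltimateOutput} produces an acceptable system output (in particular, no disagreement between two non-erring sub-decoders can occur), so the union bound gives $P_{e}(\mbox{\boldmath $g$})\le P_{e1}(\mbox{\boldmath $g$})+P_{e12}(\mbox{\boldmath $g$})$ pointwise, and weighting by $P(\mbox{\boldmath $g$})$ yields the claim for every admissible partition, hence for the maximum. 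One point worth making explicit, which your restriction to partitions with $\mathcal{R}_{1}\cup\mathcal{R}_{12}=\mathcal{R}$ already handles: if $\mathcal{R}_{1}\subsetneq\mathcal{R}\setminus\mathcal{R}_{12}$, a coding vector in the uncovered part of $\mathcal{R}$ would lie in both sub-decoders' margins, so both could legitimately report collision while the system errs, and the pointwise bound would fail; the partition hypothesis in the theorem is therefore essential, not cosmetic.
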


The proof of Theorem \ref{THM} is omitted.

\textit{Remark 1}. Note that the bounds derived in Theorems \ref{GEP12} and \ref{GEP1} can be numerically evaluated using Gaussian and saddle point approximation \cite{SaddlePoint2020}, \cite{SaddlePoint2010}.

\section{Frame Asynchronous Transmission}
Classical random access communication often employs the CSMA/CA protocol to reduce collisions and improve throughput. In CSMA/CA, a user senses channel availability during a short interval, termed the DCF Interframe Space (DIFS)~ \cite{Bianchi2000}. If the channel is sensed idle, the user randomly decides whether to transmit; otherwise, the user remains idle and continues sensing~ \cite{Bianchi2000}. A key requirement for this protocol to be effective is that the DIFS should be much shorter than the packet duration~\cite{Bianchi2000}, \cite{Luo2024}, so that sensing results during one DIFS provide predictive information about near-future channel availability.

When parallel decoding of multiple users is supported, the channel may still be considered available even while a codeword transmission is in progress. Consequently, a user may begin transmitting after sensing channel availability, resulting in frame-asynchronous transmissions relative to ongoing codewords. To properly accommodate CSMA/CA, it is therefore important for random access models and the corresponding channel coding theory to explicitly support frame-asynchronous codeword transmissions. Since such asynchrony is deliberate, it can be partially controlled by system design. For example, each time slot may be partitioned into several equal-sized mini-slots, and users may be required to start transmissions only at mini-slot boundaries. Although transmissions are asynchronous in this case, the receiver retains partial knowledge of possible transmission start times, which aids in detecting new codeword arrivals.

\subsection{Asynchronous System Model and Notations}

To keep notation manageable, we focus on a simplified scenario. Suppose two users transmit in parallel, as illustrated in Fig.~\ref{FigAsynchronous}. The receiver can detect the beginning and end of continuous transmissions. Specifically, continuous transmission starts with user~1 at symbol $0$ and ends with user~2 at symbol $NL+T_2$, where $N$ is the codeword length, $L$ is a positive integer representing the number of consecutive packets in one frame, and $T_2$ is the transmission offset of user~2.
\begin{figure}[!h]
   \begin{center}
       \includegraphics[width=3.5 in]{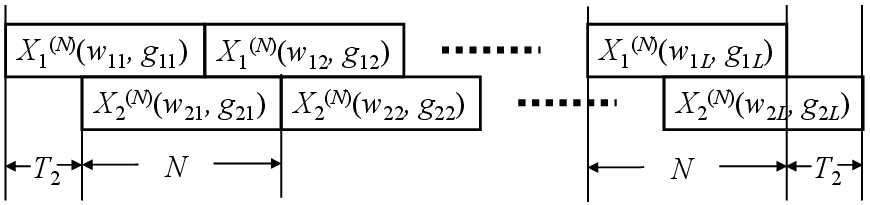}
       \caption{\label{FigAsynchronous} Asynchronous transmission from two users.}
   \end{center}
\end{figure}
We assume a static channel state, with both the channel realization and the value of $T_2$ known to the receiver. If these values are unknown, the framework can still be accommodated using the virtual-user construction introduced in \cite{Luo2015}.

Let $w_{ij}$ denote the $j$-th message of user $i$, where $j=1,\dots,L$ and $i\in\{1,2\}$. The corresponding coding option and codeword are denoted by $g_{ij}$ and $X^{(N)}_i(w_{ij}, g_{ij})$, respectively. Note that some coding options may correspond to the ``idling'' option. The channel input sequences from user~1 and user~2 are
\begin{eqnarray}
&& \mbox{\boldmath $X$}_{1}^{(N)}=[X^{(N)}_1(w_{11}), \dots, X^{(N)}_1(w_{1L}), \mbox{idling}^{(T_2)}], \nonumber \\
&& \mbox{\boldmath $X$}_{2}^{(N)}=[\mbox{idling}^{(T_2)}, X^{(N)}_2(w_{21}), \dots, X^{(N)}_2(w_{2L})]. \nonumber
\end{eqnarray}
The channel output sequence is $Y^{(NL+T_2)}$. These sequences are related by the conditional probability
\begin{equation*}
P\left(Y^{(NL+T_2)}| \mbox{\boldmath $X$}_{1}^{(N)}, \mbox{\boldmath $X$}_{2}^{(N)}\right),
\end{equation*}
derived from the memoryless channel model $P_{Y|X_1,X_2}$.

Define the universal set of message indices as
\begin{equation*}
U=\left \{[i,j]:i\in\{1,2\}, j\in\{ 1, \dots, L\}\right \}.
\end{equation*}
Let $\mbox{\boldmath $g$}_{U}$ be the coding vector containing all $g_{ij}$ for $[i,j]\in U$, and let $\mbox{\boldmath $w$}$ be the vector of all corresponding messages. Following the analysis in Section~\ref{Sec1}, the receiver typically employs multiple sub-decoders operating in parallel, where each sub-decoder jointly decodes a subset of messages without decoding the others.

\subsection{$\mathrm{GEP}$ for Asynchronous scenario}

Consider one such sub-decoder, denoted $D_D$, where $D \subseteq U$ is the subset of indices targeted for decoding. Denote by $\mbox{\boldmath $g$}_D$ and $\mbox{\boldmath $w$}_D$ the coding and message vectors restricted to indices in $D$. Sub-decoder $D_D$ either outputs estimates $(\hat{\mbox{\boldmath $w$}}_D, \hat{\mbox{\boldmath $g$}}_D)$ or reports ``collision.'' Let ${\cal R}_D$, $\widehat{{\cal R}}_D$, and ${\cal R}_{Dc}$ denote its operation region, operation margin, and collision region, respectively. Given the actual coding vector $\mbox{\boldmath $g$}_{U}$, the conditional error probability of $D_D$ is defined as
\begin{equation}
	P_{eD}(\mbox{\boldmath $g$}_{U})=\left\{\begin{array}{l} \mathbb{P}\left[\left. \begin{array}{l} \exists [i,j]\in D \mbox{ such that} \\ (\hat{w}_{ij}, \hat{g}_{ij})	\ne (w_{ij}, g_{ij}) \end{array} \right| (\mbox{\boldmath $w$}, \mbox{\boldmath $g$}_{U})\right], \\  \qquad \qquad \qquad \qquad \qquad \qquad \ \ \ \ \forall \mbox{\boldmath $g$}_{U} \in {\cal R}_D \\
	1- \mathbb{P}\left[\left.\begin{array}{l} \mbox{``collision'' or } \\ (\hat{\mbox{\boldmath $w$}}_D, \hat{\mbox{\boldmath $g$}}_D)= (\mbox{\boldmath $w$}_D, 	\mbox{\boldmath $g$}_D) \end{array}\right|(\mbox{\boldmath $w$}, \mbox{\boldmath $g$}_{U})\right], \\ \qquad \qquad \qquad \qquad \qquad \qquad \ \ \ \  \forall 	\mbox{\boldmath $g$}_{U} \in \widehat{{\cal R}}_D \\
	1- \mathbb{P}[\mbox{``collision''}|(\mbox{\boldmath $w$}, \mbox{\boldmath $g$}_{U})], \ \ \ \ \ \  \forall \mbox{\boldmath $g$}_{U} \in {\cal R}_{Dc} \end{array} \right.
\end{equation}
With a set of predetermined weight parameters $\{P(\mbox{\boldmath $g$}_{U})\}$ satisfying $P(\mbox{\boldmath $g$}_{U}) \geq 0$ and $\sum_{\mbox{\scriptsize \boldmath $g$}_{U}} P(\mbox{\boldmath $g$}_{U}) = 1$, we define the ``generalized error performance'' of $D_D$ as
\begin{equation}
	\mbox{GEP}_D =\sum_{\mbox{\scriptsize \boldmath $g$}_{U}}  P_{eD}(\mbox{\boldmath $g$}_{U})P(\mbox{\boldmath $g$}_{U}).
\label{ExtendedGEPM}
\end{equation}

The following theorem provides an achievable bound on $\mbox{GEP}_D$.

\begin{theorem}\label{Theorem1UserDecodingD}
Consider the random multiple access system described in this section, with pre-registered users. The $\mbox{GEP}_D$ of sub-decoder $D_D$ is upper bounded by
\begin{eqnarray}
	&& \!\!\!\!\!\!\!\!\!\!\! \mbox{GEP}_D \le \sum_{\mbox{\scriptsize \boldmath $g$}_{U} \in {\cal R}_D}\left(\sum_{S \subset D } \Biggl( \right. \sum_{\tilde{\mbox{\scriptsize \boldmath $g$}}_{U} \in {\cal R}_D, \tilde{\mbox{\scriptsize \boldmath $g$}}_S=\mbox{\scriptsize \boldmath $g$}_S} B_{iS}(\tilde{\mbox{\boldmath  $g$}}_{U},\mbox{\boldmath $g$}_{U}) \nonumber\\
    && + 2 \sum_{\tilde{\mbox{\scriptsize \boldmath $g$}}_{U} \in \widehat{{\cal R}}_D\cup {\cal R}_{Dc}, \tilde{\mbox{\scriptsize \boldmath $g$}}_S=\mbox{\scriptsize \boldmath $g$}_S} B_{mcS}(\tilde{\mbox{\boldmath $g$}}_{U}, \mbox{\boldmath $g$}_{U}) \Biggr) \nonumber \\
	&& \left. +2 \sum_{\tilde{\mbox{\scriptsize \boldmath $g$}}_{U} \in {\cal R}_{Dc}, \tilde{\mbox{\scriptsize \boldmath $g$}}_D=\mbox{\scriptsize \boldmath $g$}_D} B_{mcD}(\tilde{\mbox{\boldmath $g$}}_{U}, \mbox{\boldmath $g$}_{U}))  \right) ,
\label{2UserJointGEPBound}
\end{eqnarray}
where the bounds for $S \subset D $ are given by
\begin{eqnarray}
	&& \!\!\!\!\!\!\!\!\!\!\! B_{iS}(\tilde{\mbox{\boldmath $g$}}_{U}, \mbox{\boldmath $g$}_{U})=P(\mbox{\boldmath $g$}_{U}) \mathbb{E}\Bigg[\min \Bigg\{\frac{1}{|\mathcal{R}_{D_{\mbox{\scriptsize \boldmath $g$}_{S}}}|}, 2^{N\sum_{[i,j]\in D \setminus S}r_{\tilde{g}_{ij}}} \nonumber \\
    && \times \mathbb{P}\Big[\ell_{\tilde{\mbox{\scriptsize \boldmath $g$}}_{U}}\left(\mbox{\boldmath $X$}_{S}^{(N)}, \tilde{\mbox{\boldmath $X$}}_{D \setminus S}^{(N)}, Y^{(NL+T_2)}\right) > \nonumber \\
	&&   \ell_{\mbox{\scriptsize \boldmath $g$}_{U}}\left(\mbox{\boldmath $X$}_{D}^{(N)}, Y^{(NL+T_2)}\right) \Big \vert \mbox{\boldmath $X$}_{D}, Y^{(NL+T_2)}\Big]\Bigg\}\Bigg], \\
	&& \!\!\!\!\!\!\!\!\!\!\!\! B_{mcS}(\tilde{\mbox{\boldmath $g$}}_{U}, \mbox{\boldmath $g$}_{U})=\min \Bigg\{P(\mbox{\boldmath $g$}_{U}),P(\tilde{\mbox{\boldmath $g$}}_{U}) 2^{N\sum_{[i,j]\in D \setminus S}r_{g_{ij}}} \nonumber \\
    && \times \mathbb{E}\Bigg[\frac{P_{\tilde{\mbox{\scriptsize \boldmath $g$}}_{U}}\left(Y^{(NL+T_2)}\middle \vert \mbox{\boldmath $X$}_{S}^{(N)}\right)}{P\left(Y^{(NL+T_2)}\middle \vert \mbox{\boldmath $X$}_{D}^{(N)},\mbox{\boldmath $g$}_{U \backslash D}\right)}\Bigg]\Bigg\},
\end{eqnarray}
and
\begin{eqnarray}
	&& \!\!\!\!\!\!\!\!\!\!\!\! B_{mcD}(\tilde{\mbox{\boldmath $g$}}_{U}, \mbox{\boldmath $g$}_{U})= \min \Bigg\{P(\mbox{\boldmath $g$}_{U}), \nonumber \\
    && P(\tilde{\mbox{\boldmath $g$}}_{U})\mathbb{E}\Bigg[\frac{P_{\tilde{\mbox{\scriptsize \boldmath $g$}}_{U}}\left(Y^{(NL+T_2)}\middle \vert \mbox{\boldmath $X$}_{D}^{(N)}\right)}{P\left(Y^{(NL+T_2)}\middle \vert \mbox{\boldmath $X$}_{D}^{(N)},\mbox{\boldmath $g$}_{U \backslash D}\right)}\Bigg]\Bigg\}.
\end{eqnarray}
Here, the weighted likelihood is
\begin{eqnarray}
	&& \!\!\!\!\!\! \ell_{\mbox{\scriptsize \boldmath $g$}_{U}}\left( \mbox{\boldmath $X$}_{D}^{(N)}, Y^{(NL+T_2)}\right) = \nonumber \\
    && \!\!\!\!\!\! P\left(Y^{(NL+T_2)} \middle \vert \mbox{\boldmath  $X$}_{D}^{(N)}, \mbox{\boldmath $g$}_{U \backslash D}\right)P(\mbox{\boldmath $g$}_{U})  2^{-N\sum_{[i,j]\in D}r_{g_{ij}}}, \nonumber \\
\end{eqnarray}
and
\begin{eqnarray}
	\mathcal{R}_{D_{\mbox{\scriptsize \boldmath $g$}_{S}}} = \left\{ \tilde{\mbox{\boldmath $g$}}_{S} \in \mathcal{R}_{D} : \tilde{\mbox{\boldmath $g$}}_{S}=\mbox{\boldmath $g$}_{S}\right\}.
\end{eqnarray}
\end{theorem}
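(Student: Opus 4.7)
The plan is to adapt the argument behind Theorems~\ref{GEP12} and~\ref{GEP1} to the asynchronous sub-decoder $D_D$, whose decoding rule I specify first: compute the weighted likelihood $\ell_{\mbox{\scriptsize \boldmath $g$}_U}(\mbox{\boldmath $X$}_D^{(N)}, Y^{(NL+T_2)})$ over every candidate $\mbox{\boldmath $g$}_U$ and every codeword tuple consistent with $\mbox{\boldmath $g$}_D$, take the maximizer $\hat{\mbox{\boldmath $g$}}_U$, and output $(\hat{\mbox{\boldmath $w$}}_D,\hat{\mbox{\boldmath $g$}}_D)$ if $\hat{\mbox{\boldmath $g$}}_U\in{\cal R}_D$, collision if $\hat{\mbox{\boldmath $g$}}_U\in{\cal R}_{Dc}$, and either outcome if $\hat{\mbox{\boldmath $g$}}_U\in\widehat{{\cal R}}_D$. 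Indices in $U\setminus D$ are implicitly marginalized via the prior $P(\mbox{\boldmath $g$}_U)$ and the memoryless channel law, which accounts for the exact form of $\ell_{\mbox{\scriptsize \boldmath $g$}_U}$ and for the denominator $P(Y^{(NL+T_2)}\mid\mbox{\boldmath $X$}_D^{(N)},\mbox{\boldmath $g$}_{U\setminus D})$ appearing in the $B_{mcS}$ and $B_{mcD}$ bounds.

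Next I would decompose $\mathrm{GEP}_D=\sum_{\mbox{\scriptsize \boldmath $g$}_U}P_{eD}(\mbox{\boldmath $g$}_U)P(\mbox{\boldmath $g$}_U)$ along the three branches of $P_{eD}$ and attribute each error to a pairwise likelihood-comparison between the true $\mbox{\boldmath $g$}_U$ and some competitor $\tilde{\mbox{\boldmath $g$}}_U$. For $\mbox{\boldmath $g$}_U\in{\cal R}_D$, incorrect-decoding events are partitioned by the subset $S\subset D$ on which $\tilde{\mbox{\boldmath $g$}}_S=\mbox{\boldmath $g$}_S$; the number of surviving competitors in each slice is at most $|\mathcal{R}_{D_{\mbox{\scriptsize \boldmath $g$}_S}}|\cdot 2^{N\sum_{[i,j]\in D\setminus S}r_{\tilde{g}_{ij}}}$, and a union bound combined with the $\min\{1,\cdot\}$ truncation normalized by $|\mathcal{R}_{D_{\mbox{\scriptsize \boldmath $g$}_S}}|$, followed by expectation over the random code ensemble, produces the $B_{iS}$ terms in direct analogy with the $B_{i\{1\}}^{(D_{12})}$, $B_{i\{2\}}^{(D_{12})}$, $B_{i\{\}}^{(D_{12})}$ bounds of Theorem~\ref{GEP12}.

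For the remaining miss-detection and false-collision contributions, I would exploit the symmetry of the pairwise likelihood event: a competitor $\tilde{\mbox{\boldmath $g$}}_U\in\widehat{{\cal R}}_D\cup{\cal R}_{Dc}$ can dominate a true $\mbox{\boldmath $g$}_U\in{\cal R}_D$ (miss-detection), and symmetrically a competitor $\tilde{\mbox{\boldmath $g$}}_U\in{\cal R}_D\cup\widehat{{\cal R}}_D$ can dominate a true $\mbox{\boldmath $g$}_U\in{\cal R}_{Dc}$ (false collision); folding both contributions onto pairs indexed by the outer sum over $\mbox{\boldmath $g$}_U\in{\cal R}_D$ is what generates the factor of $2$. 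Each pairwise probability is handled by a change-of-measure step: rewriting $\mathbb{P}[\ell_{\tilde{\mbox{\scriptsize \boldmath $g$}}_U}>\ell_{\mbox{\scriptsize \boldmath $g$}_U}]$ under the true channel law and applying Markov's inequality yields an expectation of $P_{\tilde{\mbox{\scriptsize \boldmath $g$}}_U}(Y^{(NL+T_2)}\mid\mbox{\boldmath $X$}_S^{(N)})/P(Y^{(NL+T_2)}\mid\mbox{\boldmath $X$}_D^{(N)},\mbox{\boldmath $g$}_{U\setminus D})$ weighted by $2^{N\sum_{[i,j]\in D\setminus S}r_{g_{ij}}}$ from the rate and prior mismatch, while the trivial bound $P(\mbox{\boldmath $g$}_U)$ inside the outer $\min$ reflects that no pairwise error probability can exceed the prior weight. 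The separate $B_{mcD}$ sum handles the $S=D$ slice restricted to $\tilde{\mbox{\boldmath $g$}}_U\in{\cal R}_{Dc}$, since when $\tilde{\mbox{\boldmath $g$}}_D=\mbox{\boldmath $g$}_D$ a competitor in $\widehat{{\cal R}}_D$ would actually produce an acceptable output and contribute no error term.

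The main obstacle will be the bookkeeping imposed by the asynchronous structure: $Y^{(NL+T_2)}$ depends on $\mbox{\boldmath $X$}_1^{(N)}$ and $\mbox{\boldmath $X$}_2^{(N)}$ through the symbol-wise channel $P_{Y\mid X_1,X_2}$ with different overlap patterns across the $NL+T_2$-long window, so the likelihood ratios must be factored carefully while keeping the conditioning on $\mbox{\boldmath $g$}_{U\setminus D}$ explicit throughout the change-of-measure step. A secondary challenge is verifying that the partition into proper subsets $S\subset D$, together with the separate $S=D$ slice in ${\cal R}_{Dc}$, exhausts every error pattern without double counting across the three branches of $P_{eD}$; this requires a careful case analysis along the lines of the arguments in Appendices~\ref{AppGEP12} and~\ref{AppGEP1}.
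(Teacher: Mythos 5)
Your overall architecture matches the paper's: decompose $\mathrm{GEP}_D$ into per-subset incorrect-decoding, collision, and miss-detection terms, bound the incorrect-decoding terms with an RCU-type argument (giving $B_{iS}$ with the $1/|\mathcal{R}_{D_{\mbox{\scriptsize \boldmath $g$}_S}}|$ truncation), and treat the $S=D$ slice separately with competitors restricted to $\mathcal{R}_{Dc}$. However, there is a genuine gap in how you propose to obtain the $B_{mcS}$ and $B_{mcD}$ terms, and it traces back to your decoder. You specify a pure arg-max decoder whose output is determined by the region in which the global maximizer lies, and you then bound collision and miss-detection by pairwise comparisons $\ell_{\tilde{\mbox{\scriptsize \boldmath $g$}}_U}>\ell_{\mbox{\scriptsize \boldmath $g$}_U}$ against the \emph{true} candidate's likelihood. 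The paper's decoder is different: for every $S\subseteq D$ it builds constraint sets $\mathcal{S}_S$ requiring the weighted likelihood to exceed deterministic thresholds $\gamma_S(\tilde{\mbox{\boldmath $g$}}_U,\mbox{\boldmath $X$}_S^{(N)},Y^{(NL+T_2)})$ indexed by competitors in $\widehat{\mathcal{R}}_D\cup\mathcal{R}_{Dc}$, decodes by arg-max over the intersection, and declares collision if the intersection is empty. That threshold structure is what produces the stated bounds: after an index swap and a change of measure, the collision term (true $\mbox{\boldmath $g$}_U\in\mathcal{R}_D$ failing the test) and the miss-detection term (a margin/collision-region truth with some $\mathcal{R}_D$ candidate passing the test) become two expectations under the \emph{same} law $P(Y|\mbox{\boldmath $X$}_D)P_{\mbox{\scriptsize \boldmath $g$}_U}(\mbox{\boldmath $X$}_D)$, and choosing the optimal threshold $\gamma_S^{*}=P(\tilde{\mbox{\boldmath $g$}}_U)2^{-N\sum_{[i,j]\in S}r_{g_{ij}}}P_{\tilde{\mbox{\scriptsize \boldmath $g$}}_U}(Y^{(NL+T_2)}|\mbox{\boldmath $X$}_S^{(N)})$ balances them so that \emph{each} is bounded by the same quantity, giving exactly $2B_{mcS}$.

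With your pairwise/arg-max route this does not come out as claimed. Bounding the "competitor beats the truth" events by a union over the competitor's messages and Markov's inequality over its codebook gives, for the two directions, two \emph{mirror-image} expressions: one with exponent $2^{N\sum_{D\setminus S}r_{g_{ij}}}$ and the expectation of $P_{\tilde{\mbox{\scriptsize \boldmath $g$}}_U}(Y|\mbox{\boldmath $X$}_S)/P(Y|\mbox{\boldmath $X$}_D,\mbox{\boldmath $g$}_{U\setminus D})$ under the $\mbox{\boldmath $g$}_U$-law (this one does match $B_{mcS}$), and another with exponent $2^{N\sum_{D\setminus S}r_{\tilde{g}_{ij}}}$ and the ratio $P_{\mbox{\scriptsize \boldmath $g$}_U}(\cdot|\cdot)/P(\cdot|\cdot,\tilde{\mbox{\boldmath $g$}}_{U\setminus D})$ evaluated under the $\tilde{\mbox{\boldmath $g$}}_U$-law. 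Their sum is not $2B_{mcS}(\tilde{\mbox{\boldmath $g$}}_U,\mbox{\boldmath $g$}_U)$, so "folding both contributions onto pairs" does not by itself generate the factor of $2$ multiplying the single expression in the theorem; you would end up proving a different (mirror-summed) bound for a different decoder. To recover the stated theorem you need the missing ingredients of the paper's proof: the threshold-test constraint sets $\mathcal{S}_S$, the index-swapping/change-of-measure step (cf.\ \cite[Proposition 18.3]{IT2025}), and the optimal threshold choice (cf.\ \cite[Theorem 18.6]{IT2025}), exactly as in Appendices \ref{AppGEP12} and \ref{AppGEPD}. Your bookkeeping concerns about the asynchronous overlap and the exhaustiveness of the $S\subset D$ partition are secondary; the asynchrony only changes the block length to $NL+T_2$ and the conditioning on $\mbox{\boldmath $g$}_{U\setminus D}$, and is handled identically to the synchronous case.
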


\begin{proof}
	The proof of Theorem \ref{Theorem1UserDecodingD} is provided in Appendix \ref{AppGEPD}.
\end{proof}

\textit{Remark 2}. With frame-asynchronous transmissions, a codeword from one user may partially overlap with several codewords from the other user due to transmission offsets. To numerically evaluate the bounds in Theorem~\ref{Theorem1UserDecodingD}, one can apply approximation methods such as Gaussian or saddlepoint approximations \cite{SaddlePoint2010}, \cite{SaddlePoint2020} to each of the overlapping portions. Each bound can also be further tightened using Gallager’s approach \cite{Gallager1965}, \cite{Tang2018}, applied separately to each overlapping portion of a codeword.

Finally, the receiver obtains the desired message–code index pairs by synthesizing the outputs of the sub-decoders, following the approach introduced in Section~\ref{UltimateOutput}. A corresponding upper bound on the final GEP can then be derived.

\section{Discussion on Temporary User Identification}

Random access systems involve uncoordinated users transmitting short and bursty messages, so the receiver does not know in advance which users are active in a given time slot. Because the population of legitimate users is typically massive, assigning each user a distinct codebook would make active-user identification via codebook search computationally prohibitive. At the same time, distinguishing among active transmitters is essential, particularly when multi-packet reception is supported. This can be accomplished in two ways.

The first approach is to maintain a pool of temporary user IDs, denoted by $\mathcal{ID}$, with cardinality $|\mathcal{ID}|$. Each active user randomly selects a temporary ID to distinguish itself from other users in the area. An ID collision occurs when multiple active users choose the same temporary ID. If the number of active users is $K$, the ID collision probability $P_{\mbox{\scriptsize IDC}}$ satisfies \cite{Polyanskiy2017}
\begin{equation}
	P_{\mbox{\scriptsize IDC}} \le \binom{K}{2}\!/|\mathcal{ID}|.
\end{equation}
This probability should be added to the ultimate GEP bounds derived above.

Alternatively, the system may require users to register their selected temporary IDs with a server prior to transmission, resolving any ID conflicts during registration. Although this mechanism deviates slightly from a pure random-access model, the additional overhead is minimal. Moreover, user registration can greatly reduce the codebook search complexity at the receiver, since the receiver can focus only on the registered set of potential active users.

\section{Conclusion}
In this paper, we developed a decoding procedure for a two-user random access channel, addressing both synchronous and asynchronous transmission scenarios. In the synchronous case, we employed two parallel sub-decoders, $D_1$ and $D_{12}$, each partitioning the coding space $\mathcal{G}_{1}\times \mathcal{G}_{2}$ into three disjoint regions: operation, margin, and collision, such that $\mathcal{R}_{12} \cap \mathcal{R}_{1} = \emptyset$. Achievable upper bounds on the corresponding performance measures, $\mathrm{GEP}_{12}$ and $\mathrm{GEP}_{1}$, were derived. For the asynchronous scenario, we extended the analysis to incorporate sequential codeword transmissions, assuming that only user~2 experiences a delay $T_2$. Because the analysis for all $D \subseteq U$ follows the same structure, we focused on deriving achievable upper bounds for a representative sub-decoder $D_D$.

\appendices
\section{Proof of Theorem \ref{GEP12}} \label{AppGEP12}
\begin{proof}
To simplify notation, we abbreviate $X_{i}^{(N)}(w_i,g_i)$ and $X_{i}^{(N)}(\tilde{w}_i,\tilde{g}_i)$ as $X_{i}^{(N)}$ and $\tilde{X}_{i}^{(N)}$, respectively, for $i=1,2$. The message set associated with coding index $g_i$ is denoted by $\{1,\dots,2^{Nr_{g_i}}\}$.

We introduce the \textit{weighted likelihood} for channel output $Y^{(N)}$ and channel input vector $\mbox{\boldmath $X$}^{(N)} = \left[X_{1}^{(N)},X_{2}^{(N)}\right]^{\top}$ as
	\begin{equation}
		\ell_{\mbox{\scriptsize{\boldmath $g$}}}\left(\mbox{\boldmath $X$}^{(N)}, Y^{(N)}\right) \triangleq P\left(Y^{(N)}\middle \vert \mbox{\boldmath $X$}^{(N)}\right)P(\mbox{\boldmath $g$})2^{-N(r_{g_1}+r_{g_2})}.\label{WeightedLikelihood}
	\end{equation}

Given $Y^{(N)}$, decoder $D_{12}$ constructs the following constraint sets:
	\begin{eqnarray}
		&& \mathcal{S}_{\{1\}}^{(D_{12})} = \Big\{ (\mbox{\boldmath $w$},\mbox{\boldmath $g$}) : \mbox{\boldmath $g$ }  \in  \mathcal{R}_{12}, \tilde{\mbox{\boldmath $g$ }}\! \in\widehat{\mathcal{R}}_{12} \cup \mathcal{R}_{12c}, g_1 = \tilde{g}_{1}, \nonumber \\
		&&  \ \ \ \ \ \ \ \ \ \ \ell_{\mbox{\scriptsize \boldmath $g$}}\left(\mbox{\boldmath $X$}^{(N)},Y^{(N)}\right)\! > \! \gamma_{\{1\}}\left(\mbox{\boldmath $g$},\tilde{\mbox{\boldmath $g$}},\mbox{\boldmath $X$}^{(N)}, Y^{(N)}\right) \Big\},\nonumber
	\end{eqnarray}
	\begin{eqnarray}	
		&& \mathcal{S}_{\{2\}}^{(D_{12})} = \Big\{ (\mbox{\boldmath $w$},\mbox{\boldmath $g$}) : \mbox{\boldmath $g$ } \in \mathcal{R}_{12}, \tilde{\mbox{\boldmath $g$ }}\! \in \!\widehat{\mathcal{R}}_{12} \cup \mathcal{R}_{12c}, g_2 = \tilde{g}_{2},\nonumber \\
		&&\ \ \ \ \ \ \ \ \ \ \ell_{\mbox{\scriptsize \boldmath $g$}}\left(\mbox{\boldmath $X$}^{(N)},Y^{(N)}\right)\! > \! \gamma_{\{2\}}\left(\mbox{\boldmath $g$},\tilde{\mbox{\boldmath $g$}}, \mbox{\boldmath $X$}^{(N)},Y^{(N)}\right) \Big\},\nonumber
	\end{eqnarray}
	\begin{eqnarray}
		 && \mathcal{S}_{\{ \}}^{(D_{12})} = \Big\{ (\mbox{\boldmath $w$},\mbox{\boldmath $g$}) : \mbox{\boldmath $g$} \! \in \! \mathcal{R}_{12}, \tilde{\mbox{\boldmath $g$ }} \! \in \! \widehat{\mathcal{R}}_{12} \cup \mathcal{R}_{12c},\nonumber \\
		 &&\ \ \ \ \ \ \ \ \ \  \ell_{\mbox{\scriptsize \boldmath $g$}}\left(\mbox{\boldmath $X$}^{(N)},Y^{(N)}\right)\! > \! \gamma_{\{ \}}\left(\mbox{\boldmath $g$},\tilde{\mbox{\boldmath $g$}},\mbox{\boldmath $X$}^{(N)}, Y^{(N)}\right) \Big\}, \nonumber
	\end{eqnarray}

where $\mathcal{S}_{\{1\}}^{(D_{12})}$ and $\mathcal{S}_{\{2\}}^{(D_{12})}$ are the sets of message-code index pairs $(w_1,g_1)$ and $(w_2,g_2)$ satisfying the respective threshold tests for individual decoding, while $\mathcal{S}_{\{\}}$ corresponds to pairs $(\mbox{\boldmath $w$},\mbox{\boldmath $g$})$ satisfying the joint decoding criterion. The optimal thresholds $\gamma_{\{1\}}^{*}$, $\gamma_{\{2\}}^{*}$, and $\gamma_{\{\}}^{*}$ are chosen to minimize the bound on GEP. Although these thresholds depend on $Y^{(N)}$, $\mbox{\boldmath $X$}^{(N)}$, and $\tilde{\mbox{\boldmath $g$}}\in \mathcal{R}_{c}$, we suppress these dependencies for notational simplicity.

After constructing the constraint sets, decoder $D_{12}$ computes their intersection
		\begin{equation}
			\mathcal{S}_{I}^{(D_{12})} \triangleq \mathcal{S}_{\{1\}}^{(D_{12})} \cap \mathcal{S}_{\{2\}}^{(D_{12})} \cap \mathcal{S}_{\{ \}}^{(D_{12})},
		\end{equation}
and decodes by maximizing the weighted likelihood over $(\mbox{\boldmath $w$},\mbox{\boldmath $g$})\in\mathcal{S}_{I}^{(D_{12})}$ if $\mathcal{S}_{I}^{(D_{12})} \neq \emptyset$, i.e.,
		\begin{eqnarray}
			(\hat{\mbox{\boldmath $w$}}, \hat{\mbox{\boldmath $g$}}) = \arg\max_{(\mbox{\scriptsize \boldmath $w$}, \mbox{\scriptsize \boldmath $g$})\in \mathcal{S}_{I}^{(D_{12})}} \ell_{\mbox{\scriptsize \boldmath $g$}}\left(\mbox{\boldmath $X$}^{(N)},Y^{(N)}\right).
		\end{eqnarray}
Otherwise, it declares a collision.

For a given codebook, error probabilities are categorized into three types: \textit{incorrect decoding}, \textit{collision}, and \textit{miss-detection}. These are analyzed under the following scenarios:
		 \begin{enumerate}
		 	\item \textit{Case $\{\}$}: both users are decoded erroneously,
		 	\item \textit{Case $\{1\}$}: user $1$ is decoded successfully while user $2$ is decoded erroneously,
		 	\item \textit{Case $\{2\}$}: user $1$ is decoded erroneously while user $2$ is decoded successfully.
		 \end{enumerate}

For each case, we derive probability upper bounds, which together yield the final bound on $GEP_{12}$. Let $(\mbox{\boldmath $w$},\mbox{\boldmath $g$})$ be the actual message-code index pairs. The relevant definitions depend on whether $\mbox{\boldmath $g$}\in\mathcal{R}_{12}$ or $\mbox{\boldmath $g$}\in\widehat{\mathcal{R}}_{12}\cup \mathcal{R}_{12c}$. If $\mbox{\boldmath $g$}\in\mathcal{R}_{12}$, we consider incorrect decoding and collision probabilities, and if $\mbox{\boldmath $g$}\in\widehat{\mathcal{R}}_{12}\cup \mathcal{R}_{12c}$, we consider miss-detection probability. The analysis proceeds as follows.

		  \subsection*{Case $\{\}$ : Both Users Decoded Erroneously}
		  \begin{itemize}
		  	\item \textit{Incorrect Decoding Probability}: For $\mbox{\boldmath $g$}\in \mathcal{R}_{12}$, the probability that both users are incorrectly decoded is defined as
		  	\begin{eqnarray*}
		 	&& {P}_{i_{\{\}}}^{(D_{12})}(\mbox{\boldmath $g$}) \triangleq \mathbb{P}\Big[ \exists (\tilde{\mbox{\boldmath $w$}},\tilde{\mbox{\boldmath $g$}}) \neq (\mbox{\boldmath $w$},\mbox{\boldmath $g$}), \exists \tilde{\mbox{\boldmath $g$}}\in\mathcal{R}_{12} \  \mathrm{s.t.}\ \\
		 	&& \ \ \ \ \ \ \ \ \ \ \ \ \ \ell_{\tilde{\mbox{\scriptsize \boldmath $g$}}} \left(\tilde{\mbox{\boldmath $X$}}^{(N)},Y^{(N)} \right) \geq \ell_{\mbox{\scriptsize \boldmath $g$}}\left(\mbox{\boldmath $X$}^{(N)},Y^{(N)}\right) \Big].
		 \end{eqnarray*}
		 \item \textit{Collision Probability:} For $\mbox{\boldmath $g$}\in \mathcal{R}_{12}$, the collision probability is given by
		 \begin{eqnarray*} \label{CollisionEmpty}
		 	 &&\!\!\!\!\! {P}_{{c}_{\{\}}}^{(D_{12})}(\mbox{\boldmath $g$}) \triangleq \mathbb{P}\Big[\exists \tilde{\mbox {\boldmath $g$}} \in\widehat{\mathcal{R}}_{12}\cup \mathcal{R}_{12c}\ \mathrm{s.t.} \ \\
		 	 && \ \ \ \ \ \ \ \ \ \ \ \ \ \ \ \ \ \ \ \ \ \ \ \ \ \ \ell_{\mbox{\scriptsize \boldmath $g$}}\left(\mbox{\boldmath $X$}^{(N)},Y^{(N)}\right) \leq \gamma_{\{ \}}(\tilde{\mbox {\boldmath $g$}}) \Big].
		 \end{eqnarray*}
		 \item \textit{Miss-Detection Probability:} For $\mbox{\boldmath $g$}\in \widehat{\mathcal{R}}_{12} \cup \mathcal{R}_{12c
		 }$, the miss-detection probability is defined as
		 \begin{eqnarray*} \label{MissDetEmpty}
		 	  &&\!\!\! P_{{m}_{\{\}}}^{(D_{12})}(\mbox{\boldmath $g$})  \triangleq \mathbb{P}\Big[\exists (\tilde{\mbox{\boldmath $w$}},\tilde{\mbox{\boldmath $g$}}) \neq (\mbox{\boldmath $w$},\mbox{\boldmath $g$}), \tilde{\mbox{\boldmath $g$}} \in \mathcal{R}_{12} \ \mathrm{s.t.} \ \\
		 	  && \ \ \ \ \ \ \ \ \ \ \ \ \ \ \ \ \ \ \ \ \ \ \ \ \ \ \ \ell_{\tilde{\mbox{\scriptsize \boldmath $g$}}} \left(\tilde{\mbox{\boldmath $X$}}^{(N)}, Y^{(N)}\right) > \gamma_{\{ \}}(\tilde{\mbox {\boldmath $g$}})\Big],
		 \end{eqnarray*}
		 	 where $\tilde{\mbox{\boldmath $X$}}^{(N)}=\left [\tilde{X}_{1}^{(N)}, \tilde{X}_{2}^{(N)}\right]^{\top}$.
		  \end{itemize}
		  \subsection*{Case $\{1\}$: User 1 Correct, User 2 Erroneous}
		  \begin{itemize}
		  	\item \textit{Incorrect Decoding Probability:} For $\mbox{\boldmath $g$}\in \mathcal{R}_{12}$, the probability of incorrect decoding for user $2$ given successful decoding of user $1$ is
		  \begin{eqnarray*}
		  	&& \!\!\!\!\! {P}_{i_{\{1\}}}^{(D_{12})}(\mbox{\boldmath $g$})\triangleq \\
		  	&& \!\!\!\!\! \mathbb{P}\Big[ \exists (\tilde{w}_{2},\tilde{g}_{2}) \neq (w_2,g_2), \exists \tilde{\mbox{\boldmath $g$}}\in \mathcal{R}_{12}, (\tilde{w}_{1},\tilde{g}_{1})=(w_1,g_1) \ \\
		    &&  \ \ \ \ \ \ \ \mathrm{s.t.} \  \ell_{\tilde{\mbox{\scriptsize \boldmath $g$}}} \left(\tilde{\mbox{\boldmath $X$}}^{(N)},Y^{(N)} \right)\geq \ell_{\mbox{\scriptsize \boldmath $g$}} \left(\mbox{\boldmath $X$}^{(N)},Y^{(N)} \right)\Big].
		 \end{eqnarray*}
		 
		 \item \textit{Collision Probability:} For $\mbox{\boldmath $g$}\in \mathcal{R}_{12}$, the collision probability is
		  \begin{eqnarray*}
		  && \!\!\!\!\!\! {P}_{{c}_{\{1\}}}^{(D_{12})}(\mbox{\boldmath $g$}) \triangleq \\
		  && \!\!\!\!\!\! \mathbb{P}\Big[ \exists\tilde{\mbox{\boldmath $g$}}\in\widehat{\mathcal{R}}_{12}\cup \mathcal{R}_{12c},\tilde{g}_{1}=g_1 \ \mathrm{s.t.} \ \\
		  &&\ \ \ \ \ \ \ \ \ \ \ \ \ \ \ \ \ \ \ \ \ \ \ \ \  \ell_{\mbox{\scriptsize \boldmath $g$}}\left(\mbox{\boldmath $X$}^{(N)},Y^{(N)}\right) \leq  \gamma_{\{1\}}(\tilde{\mbox{\boldmath $g$}}) \Big].
		 \end{eqnarray*}
		 \item \textit{Miss-Detection Probability:} For $\mbox{\boldmath $g$} \in \widehat{\mathcal{R}}_{12} \cup \mathcal{R}_{12c}$, the miss-detection probability becomes
		  \begin{eqnarray*}
		  &&   \!\!\!\!\! P_{{m}_{\{1\}}}^{(D_{12})}(\mbox{\boldmath $g$}) \triangleq \\
		  &&   \!\!\!\!\! \mathbb{P}\Big[ \exists(\tilde{w}_{2}, \tilde{g}_{2})\neq (w_2,g_2), \exists \tilde{\mbox{\boldmath $g$}}\in\mathcal{R},(\tilde{w}_{1},\tilde{g}_{1})\!= \! (w_1,g_1) \\
		  &&  \ \ \ \ \ \ \ \ \ \ \ \ \ \ \ \ \ \ \  \ \ \mathrm{s.t.} \ \ell_{\tilde{\mbox{\scriptsize \boldmath $g$}}}\!\left(\tilde{\mbox{\boldmath $X$}}^{(N)},Y^{(N)}\! \right) > \gamma_{\{1\}}(\tilde{\mbox{\boldmath $g$}})\Big].
		 \end{eqnarray*}
		  \end{itemize}
		\subsection*{Case $\{2\}$: User 2 Correct, User 1 Erroneous}
		 This case mirrors \textit{Case $\{1\}$} with indices $1$ and $2$ interchanged. Thus, for $\mbox{\boldmath $g$}\in \mathcal{R}_{12}$
		 \begin{eqnarray*}
			P_{i_{\{2\}}}^{(D_{12})}(\mbox{\boldmath $g$}) = P_{i_{\{1\}}}(\mbox{\boldmath $g$})^{(D_{12})}\Big \vert_{1 \leftrightarrow 2}, \quad P_{c_{\{2\}}}^{(D_{12})}(\mbox{\boldmath $g$})=P_{c_{\{1\}}}^{(D_{12})}(\mbox{\boldmath $g$})\Big \vert_{1 \leftrightarrow 2},
		 \end{eqnarray*}
		and for $\mbox{\boldmath $g$}\in\widehat{\mathcal{R}}_{12} \cup \mathcal{R}_{12c}$,
		\begin{eqnarray*}
			P_{m_{\{2\}}}^{(D_{12})}(\mbox{\boldmath $g$}) = P_{m_{\{1\}}}^{(D_{12})}(\mbox{\boldmath $g$}) \Big \vert_{1\leftrightarrow 2}.
		\end{eqnarray*}

Having defined these probabilities, the generalized error performance in (\ref{GEP12Def}) can be expressed as
	\begin{eqnarray}\label{GEPForJoint}
		&& \!\!\!\!\! GEP_{12} =\nonumber \\
		&& \!\!\!\!\!\!\!\!\!\!\!\!\!\! \sum_{\mathcal{A}\in \{\{\},\{1\},\{2\}\}}\! \Bigg( \sum_{\mbox{\scriptsize \boldmath $g$} \in \mathcal{R}_{12}}P(\mbox{\boldmath $g$})\left(P_{i_{\mathcal{A}}}^{(D_{12})}(\mbox{\boldmath $g$})+P_{{c}_{\mathcal{A}}}^{(D_{12})}(\mbox{\boldmath $g$})\right) \\
		 	&&  \ \ \ \ \ \ \ \ \ \ \ \ \ \ \ \ \ \ \ \ \ \ \ \ \ \ \ \ \ \ \ +\sum_{\mbox{\scriptsize \boldmath $g$} \in \widehat{\mathcal{R}}_{12} \cup \mathcal{R}_{12c}}P(\mbox{\boldmath $g$})P_{{m}_{\mathcal{A}}}^{(D_{12})}(\mbox{\boldmath $g$}) \Bigg).\nonumber
		 \end{eqnarray}
We bound GEP by analyzing each term in (\ref{GEPForJoint}). The detailed derivation is presented for Case $\{\}$; the other cases follow analogously, and we provide only the final results.

By averaging over the codebook ensemble and applying the random coding union (RCU) bound, we obtain
		 \begin{eqnarray}
		 	&& \sum_{\mbox{\scriptsize \boldmath $g$}\in \mathcal{R}_{12}}P(\mbox{\boldmath $g$}) P_{i_{\{\}}}^{(D_{12})}(\mbox{\boldmath $g$})\leq \nonumber\\
		 	&& \sum_{\mbox{\scriptsize \boldmath $g$}\in \mathcal{R}_{12}}P(\mbox{\boldmath $g$}) \mathbb{E}\Bigg[\min\Bigg\{1,\sum_{\tilde{\mbox{\scriptsize \boldmath $g$}}\in \mathcal{R}_{12}}2^{N(r_{\tilde{g}_1}+r_{\tilde{g}_2})} \times \nonumber \\
		 	&&\!\!\!\!\!\! \mathbb{P}\left[\ell_{\tilde{\mbox{\scriptsize \boldmath $g$}}}\left(\tilde{\mbox{\boldmath $X$}}^{(N)},Y^{(N)}\right)\! \geq \! \ell_{\mbox{\scriptsize \boldmath $g$}}\left(\mbox{\boldmath $X$}^{(N)},Y^{(N)}\right)\middle\vert \mbox{\boldmath $X$}^{(N)}, Y^{(N)}\right]\Bigg \}\Bigg]\nonumber\\
			&& = \sum_{\mbox{\scriptsize \boldmath $g$}\in \mathcal{R}_{12}}\sum_{\tilde{\mbox{\scriptsize \boldmath $g$}}\in \mathcal{R}_{12}} P(\mbox{\boldmath $g$})\mathbb{E}\Bigg[\min\Bigg\{\frac{1}{|\mathcal{R}_{12}|}, 2^{N(r_{\tilde{g}_1}+r_{\tilde{g}_2})} \times \label{RCUJointCaseEmpty} \\
		 	&&\!\!\!\!\!\! \mathbb{P}\left[\ell_{\tilde{\mbox{\scriptsize \boldmath $g$}}}\left(\tilde{\mbox{\boldmath $X$}}^{(N)},Y^{(N)}\right) \! \geq \! \ell_{\mbox{\scriptsize \boldmath $g$}}\left(\mbox{\boldmath $X$}^{(N)},Y^{(N)}\right)\middle\vert \mbox{\boldmath $X$}^{(N)}, Y^{(N)}\right] \Bigg\}\Bigg],\nonumber
		 \end{eqnarray}
		 where the expectation is taken with respect to $P\left(Y^{(N)}\middle \vert \mbox{\boldmath $X$}^{(N)}\right)P_{\mbox{\scriptsize \boldmath $g$}}\left(\mbox{\boldmath $X$}^{(N)}\right)$.

For collision and miss-detection terms, we apply the union bound combined with codebook averaging, yielding
		 \begin{eqnarray}
		 	&&  \sum_{\mbox{\scriptsize \boldmath $g$}\in \mathcal{R}_{12}}P(\mbox{\boldmath $g$})P_{c_{\{\}}}^{(D_{12})}(\mbox{\boldmath $g$})+ \sum_{\mbox{\scriptsize \boldmath $g$}\in \widehat{\mathcal{R}}_{12}\cup\mathcal{R}_{12c}}\!\! P(\mbox{\boldmath $g$})P_{{m}_{\{\}}}^{(D_{12})}(\mbox{\boldmath $g$})\nonumber\\
		 	&& \!\!\!\!\!\!\!\!\!\!\!\! \leq \sum_{\mbox{\scriptsize \boldmath $g$}\in \mathcal{R}_{12}}\sum_{\tilde{\mbox{\scriptsize \boldmath $g$}} \in \widehat{\mathcal{R}}_{12}\cup \mathcal{R}_{12c}} \!\! P(\mbox{\boldmath $g$})\mathbb{P}\left[ \ell_{\mbox{\scriptsize \boldmath $g$}}\left(\mbox{\boldmath $X$}^{(N)},Y^{(N)}\right) \leq \gamma_{\{ \}} \right]\label{UBCase0Joint}\\
		 	&& \!\!\!\!\!\!\!\!\!\!\!\!\!\!\!\!\!\!\! +\!\!\! \sum_{\mbox{\scriptsize \boldmath $g$}\in \mathcal{R}_{12c}} \sum_{\tilde{\mbox{\scriptsize \boldmath $g$}}\in \mathcal{R}_{12}}\!\!\! P(\mbox{\boldmath $g$})2^{N(r_{\tilde{g}_1}+r_{\tilde{g}_2})}\mathbb{P} \left[\ell_{\tilde{\mbox{\scriptsize \boldmath $g$}}}\left(\tilde{\mbox{\boldmath $X$}}^{(N)},Y^{(N)}\!\right) \! > \! \gamma_{\{ \}}\! \right] \label{BeforSwappingCaseEmpty} \\
		 	&& \!\!\!\!\!\!\!\!\!\!\!\!\!\! = \sum_{\mbox{\scriptsize \boldmath $g$}\in \mathcal{R}_{12}}\sum_{\tilde{\mbox{\scriptsize \boldmath $g$}} \in \mathcal{R}_{12c}}\Bigg(P(\mbox{\boldmath $g$}) \mathbb{P}\left[\ell_{\mbox{\scriptsize \boldmath $g$}}\left(\mbox{\boldmath $X$}^{(N)},Y^{(N)}\right)\leq \gamma_{\{ \}}\right]+ \label{BeforSwappingCaseEmpty2} \\
		 	&& P(\tilde{\mbox {\boldmath $g$}})2^{N(r_{g_1}+r_{g_2})}\mathbb{P}\left[\ell_{\mbox{\scriptsize \boldmath $g$}}\left(\tilde{\mbox{\boldmath $X$}}^{(N)},Y^{(N)}\right)>\gamma_{\{ \}}\right]\Bigg), \label{SwappingCaseEmpty}
		 \end{eqnarray}
		where the second term in (\ref{SwappingCaseEmpty}) follows from swapping the indexes in (\ref{BeforSwappingCaseEmpty}). Note that the triple $\left(\mbox{\boldmath $X$}^{(N)},\tilde{\mbox{\boldmath $X$}}^{(N)}, Y^{(N)}\right)$ in (\ref{UBCase0Joint}), (\ref{BeforSwappingCaseEmpty}), and the first term in (\ref{SwappingCaseEmpty}) has joint distribution
		 \begin{eqnarray*}
		 	&&  P\left(Y^{(N)},\mbox{\boldmath $X$}^{(N)}, \tilde{\mbox{\boldmath $X$}}^{(N)}\right) = \\
		 	&&  \ \ \ \ \ \ \ \ \ \ \ \ \ \ \ P\left(Y^{(N)}\middle \vert \mbox{\boldmath $X$}^{(N)}\right)P_{\mbox{\scriptsize \boldmath $g$}}\left(\mbox{\boldmath $X$}^{(N)}\right) P_{\tilde{\mbox{\scriptsize \boldmath $g$}}}\left(\tilde{\mbox{\boldmath $X$}}^{(N)}\right),
		 \end{eqnarray*}
        while the triple $\left(\mbox{\boldmath $X$}^{(N)},\tilde{\mbox{\boldmath $X$}}^{(N)}, Y^{(N)}\right)$ in the second term in (\ref{SwappingCaseEmpty}) has joint distribution
          \begin{eqnarray*}
		 	&& P\left(Y^{(N)},\mbox{\boldmath $X$}^{(N)}, \tilde{\mbox{\boldmath $X$}}^{(N)}\right) = \\
		 	&& \ \ \ \ \ \ \ \ \ \ \ \ \ \ \ P\left(Y^{(N)}\middle \vert \mbox{\boldmath $X$}^{(N)}\right)P_{\tilde{\mbox{\scriptsize \boldmath $g$}}}\left(\mbox{\boldmath $X$}^{(N)}\right) P_{\mbox{\scriptsize \boldmath $g$}}\left(\tilde{\mbox{\boldmath $X$}}^{(N)}\right).
		 \end{eqnarray*}

By \cite[Proposition 18.3]{IT2025}, we rewrite (\ref{SwappingCaseEmpty}) as the following
		 \begin{eqnarray}
		 	&& \!\!\!\!\!\!\!\! \sum_{\mbox{\scriptsize \boldmath $g$}\in \mathcal{R}_{12}}P(\mbox{\boldmath $g$})P_{c_{\{\}}}^{(D_{12})}(\mbox{\boldmath $g$})+ \sum_{\mbox{\scriptsize \boldmath $g$}\in \widehat{\mathcal{R}}_{12}\cup\mathcal{R}_{12c}}P(\mbox{\boldmath $g$})P_{{m}_{\{\}}}^{(D_{12})}(\mbox{\boldmath $g$})\leq \nonumber \\
		 	&& \!\!\!\!\!\!\!\! \sum_{\mbox{\scriptsize \boldmath $g$}\in \mathcal{R}_{12}} \sum_{\tilde{\mbox{\scriptsize\boldmath $g$}}\in \widehat{\mathcal{R}}_{12}\cup\mathcal{R}_{12c}} \mathbb{E}\Bigg[P(\mbox{\boldmath $g$})\mathbf{1}\!\left\{\ell_{\mbox{\scriptsize \boldmath $g$}} \left(\mbox{\boldmath $X$}^{(N)},Y^{(N)}\right) \leq  \gamma_{\{ \}}\! \right\}  +  \label{CollisionMissJointCaseEmpt} \\
		 	&& \!\!\!\!\!\!\!\!\!\!\!\!\!\!\!\! \frac{P_{\tilde{\mbox{\scriptsize \boldmath $g$}}}\left(Y^{(N)}\right) P(\tilde{\mbox{\boldmath $g$}}) P(\mbox{\boldmath $g$})}{\ell_{\mbox{\scriptsize\boldmath $g$}}\left(\mbox{\boldmath $X$}^{(N)},Y^{(N)}\right)}\mathbf{1} \left\{\ell_{\mbox{\scriptsize \boldmath $g$ }} \left(\mbox{\boldmath $X$}^{(N)},Y^{(N)} \right) > \gamma_{\{ \}}\right\}\Bigg],\label{IndicatorLessthanOne1}
		 \end{eqnarray}
		 where $\mathbf{1}\{.\}$ is the indicator function, and the triple $\left(\mbox{\boldmath $X$}^{(N)},\tilde{\mbox{\boldmath $X$}}^{(N)}, Y^{(N)}\right)$ in both (\ref{CollisionMissJointCaseEmpt}) and (\ref{IndicatorLessthanOne1})  has joint distribution
		 \begin{eqnarray*}
		 	&& P\left(Y^{(N)},\mbox{\boldmath $X$}^{(N)}, \tilde{\mbox{\boldmath $X$}}^{(N)}\right) = \\
		 	&& \ \ \ \ \ \ \ \ \ \ \ \ \ \ \ P\left(Y^{(N)}\middle \vert \mbox{\boldmath $X$}^{(N)}\right)P_{\mbox{\scriptsize \boldmath $g$}}\left(\mbox{\boldmath $X$}^{(N)}\right) P_{\tilde{\mbox{\scriptsize \boldmath $g$}}}\left(\tilde{\mbox{\boldmath $X$}}^{(N)}\right) .
		 \end{eqnarray*}
		 We then minimize (\ref{CollisionMissJointCaseEmpt}) and (\ref{IndicatorLessthanOne1}) by the optimal threshold \cite[Theorem 18.6]{IT2025} as follows
		 \begin{eqnarray}\label{optimalThreshold12}
		 	\gamma_{\{ \}}^{*}\left(\tilde{\mbox{\boldmath $g$}},Y^{(N)}\right) = P(\tilde{\mbox{\boldmath $g$}}) P_{\tilde{\mbox{\scriptsize \boldmath $g$}}}\left(Y^{(N)}\right).
		 \end{eqnarray}
		 Here, $P_{\tilde{\mbox{\scriptsize \boldmath $g$}}}\left(Y^{(N)}\right)$ denotes the marginal probability of $Y^{(N)}$ given that the input symbol vector sequence $\mbox{\boldmath $X$}^{(N)}$ is generated according to $P_{\tilde{\mbox{\scriptsize \boldmath $g$}}}\left(\mbox{\boldmath $X$}^{(N)}\right)$.
		
		 Substituting (\ref{optimalThreshold12}) into (\ref{CollisionMissJointCaseEmpt}) and (\ref{IndicatorLessthanOne1}), we get
		 \begin{eqnarray}
		 	&& \!\!\!\!\!\!\!\! \sum_{\mbox{\scriptsize \boldmath $g$}\in \mathcal{R}_{12}}P(\mbox{\boldmath $g$})P_{c_{\{\}}}^{(D_{12})}(\mbox{\boldmath $g$})+ \sum_{\mbox{\scriptsize \boldmath $g$}\in \widehat{\mathcal{R}}_{12}\cup\mathcal{R}_{12c}}P(\mbox{\boldmath $g$})P_{{m}_{\{\}}}^{(D_{12})}(\mbox{\boldmath $g$})\leq \nonumber  \\
		 	&& \!\!\!\!\!\!\!\!\!\! \sum_{\mbox{\scriptsize \boldmath $g$}\in \mathcal{R}_{12}}\! \sum_{\tilde{\mbox{\scriptsize \boldmath $g$}}\in \widehat{\mathcal{R}}_{12}\cup\mathcal{R}_{12c}} \!\!\!\!\!\!\!\!\! P(\mbox{\boldmath $g$})\mathbb{P} \!\! \left[\frac{P_{\tilde{\mbox{\scriptsize \boldmath $g$}}}\left(Y^{(N)}\right)P(\tilde{\mbox{\boldmath $g$}})}{P\left(Y^{(N)}\middle \vert \mbox{\boldmath $X$}^{(N)}\right)P(\mbox{\boldmath $g$})} \! \geq \! 2^{-N(r_{{g}_{1}}+r_{{g}_{2}})} \! \right]\! +  \nonumber \\
		 	&& \!\!\! \label{SubstituteCaseEmpty}	\\	
		 	&&\!\!\!\!\!\!\!\!\!\!\!\!\!\! \mathbb{E}\left[\frac{P_{\tilde{\mbox{\scriptsize \boldmath $g$}}}\left(Y^{(N)}\right)P(\tilde{\mbox{\boldmath $g$}})P(\mbox{\boldmath $g$})}{\ell_{\mbox{\scriptsize \boldmath $g$}}\left(\mbox{\boldmath $X$}^{(N)},Y^{(N)}\right)}\mathbf{1}\!\! \left\{\frac{P_{\tilde{\mbox{\scriptsize \boldmath $g$}}}\left(Y^{(N)}\right)P(\tilde{\mbox{\boldmath $g$}})P(\mbox{\boldmath $g$})}{\ell_{\mbox{\scriptsize \boldmath $g$}}\! \left(\mbox{\boldmath $X$}^{(N)},Y^{(N)} \right)}\! < \!P(\mbox{\boldmath $g$})\right\}\right] \nonumber \\
		 	&& \label{IndicatorLessthanOne2}\\
		 	&& \!\!\! \leq 2 \sum_{\mbox{\scriptsize \boldmath $g$}\in \mathcal{R}_{12}}\sum_{\tilde{\mbox{\scriptsize \boldmath $g$}}\in \widehat{\mathcal{R}}_{12} \cup \mathcal{R}_{12c}} \min \Bigg\{P(\mbox{\boldmath $g$}),P(\tilde{\mbox{\boldmath $g$}})2^{N(r_{g_1}+r_{g_2})}\times \nonumber \\
		 	&& \ \ \ \ \ \ \ \ \ \ \ \ \ \ \ \ \ \  \ \ \ \ \ \ \ \ \ \ \  \  \mathbb{E}\left[\frac{P_{\tilde{\mbox{\scriptsize \boldmath $g$}}}\left(Y^{(N)}\right)}{P\left(Y^{(N)}\middle \vert \mbox{\boldmath $X$}^{(N)}\right)}\right]\Bigg\},\label{MarkovJointCaseEmpty}
		 \end{eqnarray}
		 where the bound in (\ref{MarkovJointCaseEmpty}) follows from applying Markov's inequality to (\ref{SubstituteCaseEmpty}) along with this fact that probability is less than or equal to one, and leveraging the property that indicator function is less than or equal to one in (\ref{IndicatorLessthanOne2}).

Similarly, following the RCU bounding procedure \cite{IT2025},  \cite{Haim2018}, the incorrect decoding probability for \textit{Case $\{1\}$} is
	\begin{eqnarray}\label{RCUForCase1}
		&&\sum_{\mbox{\scriptsize \boldmath $g$}\in \mathcal{R}_{12}}P(\mbox{\boldmath $g$})P_{i_{\{1\}}}^{(D_{12})}(\mbox{\boldmath $g$})\leq \nonumber\\
		&&\sum_{\mbox{\scriptsize \boldmath $g$}\in \mathcal{R}_{12}}\sum_{\scriptsize\begin{array}{c}	\tilde{\mbox{\boldmath $g$}}\in \mathcal{R}_{12} \\ \tilde{g}_{1}=g_1\end{array}} P(\mbox{\boldmath $g$})\mathbb{E}\Bigg[\min \Bigg\{\frac{1}{\left \vert \mathcal{R}_{{12}_{g_1}}\right\vert}, 2^{Nr_{\tilde{g}_{2}}} \times \label{RCUJointCase1}\\
		&& \!\!\!\!\!\!\!\!\!\!\!\!\!\! \mathbb{P}\left[\ell_{\tilde{\mbox{\scriptsize \boldmath $g$}}}\!\left({X}_{1}^{(N)}, \tilde{X}_{2}^{(N)},Y^{(N)}\!\right)\! >\! \ell_{\mbox{\scriptsize \boldmath $g$}}\! \left(\mbox{\boldmath $X$}^{(N)},Y^{(N)}\! \right) \middle \vert \mbox{\boldmath $X$}^{(N)}\!\!, Y^{(N)}\! \right]\! \Bigg\}\! \Bigg],\nonumber
	\end{eqnarray}
where $R_{{12}_{g_1}}\triangleq \{\tilde{\mbox{\boldmath $g$}}\in \mathcal{R}_{12} : \tilde{g}_{1}=g_1\}$, and for \textit{Case $\{2\}$} is
	\begin{eqnarray}
		&&\sum_{\mbox{\scriptsize \boldmath $g$}\in \mathcal{R}_{12}}P(\mbox{\boldmath $g$})P_{i_{\{2\}}}^{(D_{12})}(\mbox{\boldmath $g$})\leq \nonumber \\
		&& \sum_{\mbox{\scriptsize \boldmath $g$}\in \mathcal{R}_{12}}\sum_{\scriptsize\begin{array}{c}\tilde{\mbox{\boldmath $g$}}\in \mathcal{R}_{12} \\ \tilde{g}_{2}=g_2\end{array}} P(\mbox{\boldmath $g$}) \mathbb{E}\Bigg[\min \Bigg\{\frac{1}{\left \vert \mathcal{R}_{{12}_{g_2}}\right\vert}, 2^{Nr_{\tilde{g}_{1}}}\times \label{RCUJointCase2}\\
		&& \!\!\!\!\!\!\!\!\!\!\!\! \mathbb{P}\left[\ell_{\tilde{\mbox{\scriptsize \boldmath $g$}}}\! \left(\tilde{X}_{1}^{(N)}\! , {X}_{2}^{(N)}\! ,Y^{(N)}\! \right)\! > \! \ell_{\mbox{\scriptsize \boldmath $g$}}\! \left(\mbox{\boldmath $X$}^{(N)},Y^{(N)}\! \right) \middle \vert \mbox{\boldmath $X$}^{(N)} \!\! , Y^{(N)}\right]\! \Bigg\}\! \Bigg],\nonumber
	\end{eqnarray}
where $\mathcal{R}_{{12}_{g_2}}\triangleq \{\tilde{\mbox{\boldmath $g$}}\in \mathcal{R}_{12} : \tilde{g}_{2}=g_2\}$.

To bound the collision and miss-detection probabilities for \textit{Case $\{1\}$} in (\ref{GEPForJoint}), we adopt the analytical framework developed in (\ref{SwappingCaseEmpty}), (\ref{CollisionMissJointCaseEmpt}), and (\ref{IndicatorLessthanOne1}). By replicating the methodology applied to \textit{Case $\{\}$}, we have
	\begin{eqnarray}
		&& \!\!\!\!\!\!\!\!\!\!\!\!\! \sum_{\mbox{\scriptsize \boldmath $g$}\in \mathcal{R}_{12}}P(\mbox{\boldmath $g$})P_{{c}_{\{1\}}}^{(D_{12})}(\mbox{\boldmath $g$})+ \sum_{\mbox{\scriptsize \boldmath $g$}\in \widehat{\mathcal{R}}_{12}\cup\mathcal{R}_{12c}}P(\mbox{\boldmath $g$})P_{{m}_{\{1\}}}^{(D_{12})}(\mbox{\boldmath $g$})\leq \nonumber \\
	 	&&  \!\!\!\!\!\!\!\!\!\!\!\!\! \sum_{\mbox{\scriptsize \boldmath $g$}\in \mathcal{R}_{12}}\!\!\!\! \sum_{\scriptsize \begin{array}{c}\tilde{\mbox{\boldmath $g$}} \in\widehat{\mathcal{R}}_{12}\cup \mathcal{R}_{12c} \\ \tilde{g}_{1}=g_1\end{array}} \!\!\!\!\!\!\!\!\! \Bigg( P(\mbox{\boldmath $g$}) \mathbb{E}\left[\mathbf{1}\left\{\ell_{\mbox{\scriptsize \boldmath $g$}}\left(\mbox{\boldmath $X$}^{(N)},Y^{(N)}\right)\! \leq \! \gamma_{\{1\}}\right\}\right] \label{CollisionMissCase1Part1}\\
	 	&& \!\!\!\!\!\!\!\! + 2^{Nr_{{g}_{2}}}P(\tilde{\mbox{\boldmath $g$}}) \mathbb{E}\left[ \mathbf{1}\left\{\ell_{{\mbox{\scriptsize \boldmath $g$}}}\left(X_{1}^{(N)}, \tilde{X}_{2}^{(N)},Y^{(N)}\right) > \gamma_{\{1\}}\right\}\right]\Bigg)\label{CollisionMissCase1Part2}\\
	 	&& \!\!\!\!\!\!\!\!\!\!\!\!\!  = \!\!\!\! \sum_{\mbox{\scriptsize \boldmath $g$}\in \mathcal{R}_{12}}\!\!\!\! \sum_{\scriptsize \begin{array}{c}\tilde{\mbox{\boldmath $g$}} \in \widehat{\mathcal{R}}_{12} \cup \mathcal{R}_{12c} \\ \tilde{g}_{1}=g_1\end{array}} \!\!\!\!\!\!\!\!\! \mathbb{E}\Bigg[P(\mbox{\boldmath $g$})\mathbf{1}\left\{\ell_{\mbox{\scriptsize \boldmath $g$}}\left(\mbox{\boldmath $X$}^{(N)},Y^{(N)}\right)\! \leq \! \gamma_{\{1\}}\right\}\! +  \label{BoundCollisionMissCase1} \\
	 	&& \!\!\!\!\!\!\!\!\!\!\! \frac{P_{\tilde{g}_{2}}\left(Y^{(N)}\middle \vert X_{1}^{(N)}\right)P(\tilde{\mbox{\boldmath $g$}})P(\mbox{\boldmath $g$})}{2^{Nr_{g_1}}\ell_{\mbox{\scriptsize \boldmath $g$}}\! \left(\mbox{\boldmath $X$}^{(N)},Y^{(N)}\!  \right)} \mathbf{1}\left\{\ell_{\mbox{\scriptsize \boldmath $g$}}\left(\mbox{\boldmath $X$}^{(N)},Y^{(N)}\right)\! > \! \gamma_{\{1\}}\right\} \! \Bigg],\nonumber
	\end{eqnarray}
where $P_{\tilde{g}_{2}}\left(Y^{(N)}\middle \vert X_{1}^{(N)}\right)$ denotes the conditional probability of $Y^{(N)}$ given that the input symbol sequence $X_{2}^{(N)}$ is generated according to $P_{\tilde{g}_2}\left(X_2^{(N)}\right)$, and the expectation in (\ref{CollisionMissCase1Part2}) is taken with respect to joint distribution
    \begin{eqnarray*}
		 &&  P\left(\mbox{\boldmath $X$}^{(N)}, \tilde{X}_{2}^{(N)}, Y^{(N)}\right) = \\
	 	 &&  \ \ \ \ \ \ \ \ \ \ \ \ \ \ \  P\left(Y^{(N)}\middle \vert \mbox{\boldmath $X$}^{(N)}\! \right)\! P_{\mbox{\scriptsize \boldmath $g$}}\left( \mbox{\boldmath $X$}^{(N)}\right)P_{\tilde{g}_2}\left(\tilde{X}_{2}^{(N)}\right).
	 \end{eqnarray*}

From (\ref{BoundCollisionMissCase1}), the optimal threshold for \textit{Case $\{1\}$} is derived as
	\begin{eqnarray}\label{OptimalThresholdCase1}
		\!\!\!\! \gamma_{\{1\}}^{*}\! \left(\tilde{\mbox{\boldmath $g$}},Y^{(N)}\! \right)=  2^{-Nr_{{g}_{1}}}	P_{\tilde{g}_{2}}\left(Y^{(N)}\middle \vert X_{1}^{(N)}\right)P(\tilde{\mbox{\boldmath $g$}}).
	\end{eqnarray}
Substituting (\ref{OptimalThresholdCase1}) into (\ref{BoundCollisionMissCase1}), we obtain
	\begin{eqnarray}
		&& \!\!\!\!\!\!\!\!\! \sum_{\mbox{\scriptsize \boldmath $g$}\in \mathcal{R}_{12}}P(\mbox{\boldmath $g$})P_{{c}_{\{1\}}}^{(D_{12})}(\mbox{\boldmath $g$})+ \sum_{\mbox{\scriptsize \boldmath $g$}\in \widehat{\mathcal{R}}_{12}\cup \mathcal{R}_{12c}}P(\mbox{\boldmath $g$})P_{{m}_{\{1\}}}^{(D_{12})}(\mbox{\boldmath $g$})\leq \nonumber\\
		&& \!\!\!\!\!\!\!\!\!  \sum_{\mbox{\scriptsize \boldmath $g$}\in \mathcal{R}_{12}}\!\!\! \sum_{\scriptsize \begin{array}{c}\tilde{\mbox{\boldmath $g$}} \in \widehat{\mathcal{R}}_{12}\cup \mathcal{R}_{12c} \\ \tilde{g}_{1}=g_1\end{array}} \!\!\!\!\!\!\!\!\!\!\!\!\! P(\mbox{\boldmath $g$}) \mathbb{P}\!\left[\frac{P_{\tilde{g}_{2}}\left(Y^{(N)} \middle \vert X_{1}^{(N)}\right)P(\tilde{\mbox{\boldmath $g$}})}{P\left(Y^{(N)}\middle \vert \mbox{\boldmath $X$}^{(N)}\right)P(\mbox{\boldmath $g$})}\! \geq \! 2^{-Nr_{g_2}}\! \right] \nonumber \\
		&& \label{CollisionMissCase1_Part1}\\
		&&  + \mathbb{E} \Bigg [\frac{P_{\tilde{g}_{2}} \left(Y^{(N)}\middle \vert X_{1}^{(N)}  \right)\! P(\tilde{\mbox{\boldmath $g$}})P(\mbox{\boldmath $g$})}{2^{Nr_{g_1}}\ell_{\mbox{\scriptsize \boldmath $g$}} \left(\mbox{\boldmath $X$}^{(N)} ,Y^{(N)}\! \right)} \times \nonumber \\
		&& \ \ \ \ \ \ \ \ \ \mathbf{1}\Bigg\{\frac{P_{\tilde{g}_{2}} \left(Y^{(N)}\middle \vert X_{1}^{(N)} \right)P(\tilde{\mbox{\boldmath $g$}})P(\mbox{\boldmath $g$})}{2^{Nr_{g_1}}\ell_{\mbox{\scriptsize \boldmath $g$}} \left(\mbox{\boldmath $X$}^{(N)},Y^{(N)}\! \right)}\! < \! P(\mbox{\boldmath $g$})  \Bigg\} \Bigg]\label{CollisionMissCase1_Part2}\\
		&& \!\!\!\! \leq 2 \sum_{\mbox{\scriptsize \boldmath $g$}\in \mathcal{R}_{12}} \!\!\! \sum_{\scriptsize \begin{array}{c}\tilde{\mbox{\boldmath $g$}} \in \widehat{\mathcal{R}}_{12}\cup \mathcal{R}_{12c} \\ \tilde{g}_{1}=g_1\end{array}} \!\!\!\!\!\! \min \! \Bigg\{ P(\mbox{\boldmath $g$}), P(\tilde{\mbox{\boldmath $g$}}) 2^{Nr_{{g}_{2}}} \times \label{BoundCollisionMissCase1Final} \\
		&& \ \ \ \ \ \ \ \ \ \ \ \ \ \ \ \ \ \ \ \ \ \ \ \ \ \ \ \ \ \ \ \  \mathbb{E} \Bigg[\frac{P_{\tilde{g}_{2}}\left(Y^{(N)}\middle \vert X_{1}^{(N)}\right)}{P\left(Y^{(N)}\middle \vert \mbox{\boldmath $X$}^{(N)}\right)} \Bigg] \Bigg\}, \nonumber
	\end{eqnarray}
	where (\ref{BoundCollisionMissCase1Final}) follows from applying Markov's inequality to (\ref{CollisionMissCase1_Part1}), the trivial upper bound one on probability, and bounding above the indicator function in (\ref{CollisionMissCase1_Part2}) by one.
	
Following the analogous derivation to (\ref{BoundCollisionMissCase1}), the optimal threshold for \textit{Case $\{2\}$} is given by
	\begin{eqnarray}\label{OptimalThresholdCase2}
		\gamma_{\{2\}}^{*}\! \left(\tilde{\mbox{\boldmath $g$}},Y^{(N)}\! \right) \! = \! 2^{-Nr_{{g}_{2}}}P_{\tilde{g}_{1}}\! \left(Y^{(N)}\middle \vert X_{2}^{(N)}\! \right)P(\tilde{\mbox{\boldmath $g$}}).
	\end{eqnarray}
Using (\ref{OptimalThresholdCase2}), the summation of terms in (\ref{GEPForJoint}) corresponding to collision and miss-detection probabilities for \textit{Case $\{2\}$} is bounded above as
	\begin{eqnarray}
		&& \sum_{\mbox{\scriptsize \boldmath $g$}\in \mathcal{R}_{12}}P(\mbox{\boldmath $g$})P_{{c}_{\{2\}}}^{(D_{12})}(\mbox{\boldmath $g$})+ \sum_{\mbox{\scriptsize \boldmath $g$}\in \widehat{\mathcal{R}}_{12}\cup \mathcal{R}_{12c}}P(\mbox{\boldmath $g$})P_{{m}_{\{2\}}}^{(D_{12})}(\mbox{\boldmath $g$})\leq \nonumber\\
		&& 2 \sum_{\mbox{\scriptsize \boldmath $g$}\in \mathcal{R}_{12}} \sum_{\scriptsize \begin{array}{c}\tilde{\mbox{\boldmath $g$}} \in \widehat{\mathcal{R}}_{12} \cup \mathcal{R}_{12c} \\ \tilde{g}_{2}=g_2\end{array}}\!\!\!\!\!\! \min\Bigg\{P(\mbox{\boldmath $g$}), P(\tilde{\mbox{\boldmath $g$}}) 2^{Nr_{{g}_{1}}} \times  \label{BoundCollisionMissCase2Final} \\
	    && \ \ \ \ \ \ \ \ \ \ \ \ \ \ \ \ \ \ \ \ \ \ \ \ \ \ \ \ \ \ \ \ \mathbb{E}\Bigg[\frac{P_{\tilde{g}_{1}}\left(Y^{(N)} \middle \vert X_{2}^{(N)}\right)}{P\left(Y^{(N)}\middle \vert \mbox{\boldmath $X$}^{(N)}\! \right)}\!  \Bigg]\! \Bigg\},\nonumber
	\end{eqnarray}
where $P_{\tilde{g}_{1}}\! \left(Y^{(N)}\middle \vert X_{2}^{(N)}\right)$ denotes the conditional probability of $Y^{(N)}$ given that the input symbol sequence $X_{1}^{(N)}$ is generated according to $P_{\tilde{g}_1}\left(X_1^{(N)}\right)$, and the expectation is taken with respect to joint distribution
    \begin{eqnarray*}
		 &&  P\left(\mbox{\boldmath $X$}^{(N)}, \tilde{X}_{1}^{(N)}, Y^{(N)}\right) = \\
   	 && \ \ \ \ \ \ \ \ \ \ \ \ \ \ \ P\left(Y^{(N)}\middle \vert \mbox{\boldmath $X$}^{(N)}\! \right)\! P_{\mbox{\scriptsize \boldmath $g$}}\left( \mbox{\boldmath $X$}^{(N)}\right)P_{\tilde{g}_1}\left(\tilde{X}_{1}^{(N)}\right).
	 \end{eqnarray*}

Substituting the upper bounds from (\ref{RCUJointCaseEmpty}), (\ref{MarkovJointCaseEmpty}), (\ref{RCUJointCase1}), (\ref{RCUJointCase2}), (\ref{BoundCollisionMissCase1Final}), and (\ref{BoundCollisionMissCase2Final}) into (\ref{GEPForJoint}) completes the proof.
\end{proof}

\section{Proof of Theorem \ref{GEP1}} \label{AppGEP1}

\begin{proof}
The proof follows the same reasoning as Theorem~\ref{GEP12}, but specialized to the single-user setting of decoder $D_1$. Here, decoding is performed by maximizing the weighted likelihood over the intersection of the decision sets $\mathcal{S}_{\{ \}}^{(D_{1})}$ and $\mathcal{S}^{(D_{1})}_{\{1\}}$. Decoder $D_1$ selects the candidate with the highest weighted likelihood if the intersection is non-empty, and declares a collision otherwise:
\begin{eqnarray}
		(\hat{w}_{1},\hat{\mbox{\boldmath $g$}}) =  \underset{
  \scriptsize
  \begin{array}{@{}c@{}}
    ({w}_{1}, \mbox{\boldmath $g$}) \in  \mathcal{S}_{\{ \}}^{(D_1)}\cap\mathcal{S}_{\{1\}}^{(D_{1})} \\
    \mathcal{S}_{\{ \}}^{(D_1)}\cap\mathcal{S}_{\{1\}}^{(D_{1})}  \neq \emptyset
  \end{array} }
{\arg\max} \ell_{\mbox{\scriptsize \boldmath $g$}}\left(X_{1}^{(N)}, Y^{(N)}\right),
	\end{eqnarray}
	 where the weighted likelihood is
	 \begin{eqnarray}
	 	\ell_{\mbox{\scriptsize \boldmath $g$}}\left(X_{1}^{(N)}, Y^{(N)}\right) = P\left(Y^{(N)} \middle \vert X_1^{(N)}, g_2\right)P(\mbox{\boldmath $g$})2^{-Nr_{g_1}}, \nonumber
	 \end{eqnarray}
	 and the decision sets for sub-decoder $D_1$ are
	\begin{eqnarray}
		&& \!\!\!\!\!\!\!\!\!\!\! \mathcal{S}_{\{ \}}^{(D_{1})} = \Big\{ (w_1, \mbox{\boldmath $g$}): \mbox{\boldmath $g$}\in \mathcal{R}_{1}, \tilde{\mbox{\boldmath $g$}}\in \widehat{\mathcal{R}}_{1}\cup \mathcal{R}_{1c},\nonumber \\
		&& \ \ \  \ \ \ \ \ \ \ \ \ \ell_{\mbox{\scriptsize \boldmath $g$}}\left(X_{1}^{(N)},Y^{(N)}\right) > \gamma_{\{ \}}\left(\mbox{\boldmath $g$},\tilde{\mbox{\boldmath $g$}},Y^{(N)}\right)\Big\},\nonumber
	\end{eqnarray}
	\begin{eqnarray}
		&& \!\!\!\!\!\!\!\!\! \mathcal{S}^{(D_1)}_{\{1\}} = \Big\{(w_1,\mbox{\boldmath $g$}) : \mbox{\boldmath $g$}\in \mathcal{R}_{1}, \tilde{\mbox{\boldmath $g$}}\in \widehat{\mathcal{R}}_{1}\cup \mathcal{R}_{1c}, \tilde{g}_{1}=g_1, \nonumber \\
		&&\ \ \ \ \ \ \ \ \ell_{\mbox{\scriptsize \boldmath $g$}}\left(X_1^{(N)},Y^{(N)}\right)> \gamma_{\{1\}} \left(\tilde{\mbox{\boldmath $g$}},X_1^{(N)},Y^{(N)}\right)\Big\},\nonumber
	\end{eqnarray}
	with thresholds $\gamma_{\{ \}}$ and $\gamma_{\{1\}}$.

For the actual message $w_1$ and augmented code index vector $\mbox{\boldmath $g$}\in \mathcal{R}_{1}$, we have the following cases.

In \textit{Case $\{\}$}, the incorrect decoding and collision probabilities are, respectively,
	\begin{eqnarray}
		&& \!\!\!\!\!\!\!\!\!\! P_{i_{\{\}}}^{(D_1)}(\mbox{\boldmath $g$}) \triangleq \mathbb{P}\Big[\exists (\tilde{w}_{1},\tilde{g}_{1})\neq (w_1,g_1), \exists \tilde{\mbox{\boldmath $g$}}\in \mathcal{R}_{1} \ \mathrm{s.t.} \ \nonumber\\
		&& \ \ \ \ \  \ \ \ \ \ \  \ \ \ \ \ \ \ell_{\tilde{\mbox{\scriptsize \boldmath $g$}}} \left(\tilde{X}_{1}^{(N)}, Y^{(N)} \right) \geq  \ell_{\mbox{\scriptsize \boldmath $g$}}\left(X_{1}^{(N)}, Y^{(N)}\right) \Big],\nonumber
	\end{eqnarray}
	\begin{eqnarray}
		P_{c_{\{\}}}^{(D_1)}(\mbox{\boldmath $g$})\! \triangleq \! \mathbb{P}\Big[\exists \tilde{\mbox{\boldmath $g$}}\in \widehat{\mathcal{R}}_{1}\cup \mathcal{R}_{1c} \ \mathrm{s.t.} \ \ell_{\mbox{\scriptsize \boldmath $g$}}\! \left(X_{1}^{(N)}\! , Y^{(N)} \! \right)\! \leq \! \gamma_{\{ \}}\Big].\nonumber
	\end{eqnarray}
For $\mbox{\boldmath $g$} \in \widehat{\mathcal{R}}_{1}\cup \mathcal{R}_{1c}$, the miss-detection probability for \textit{Case $\{\}$} is
	\begin{eqnarray}
		 && \!\!\!\!\!\!\!\!\!\!\!\! P_{m_{\{\}}}^{(D_1)}(\mbox{\boldmath $g$}) \triangleq \mathbb{P}\Big[\exists (\tilde{w}_1,\tilde{g}_{1}) \neq (w_1,g_1), \exists \tilde{\mbox{\boldmath $g$}} \in \mathcal{R}_{1} \ \mathrm{s.t.} \ \nonumber \\
		 &&  \ \ \ \ \ \ \ \ \ \ \ \ \ \  \ \ \ \ \ \ \ \ \ \ \ \ \ \ \ \ \ \ \ \ \ \ell_{\tilde{\mbox{\scriptsize \boldmath $g$}}}\! \left(\tilde{X}_{1}^{(N)}, Y^{(N)}\! \right) \! > \! \gamma_{\{ \}} \Big].\nonumber
	\end{eqnarray}
	
In \textit{Case $\{1\}$}, when $\mbox{\boldmath $g$} \in \mathcal{R}_{1}$, the collision probability is defined as
\begin{eqnarray}
		 && P_{c_{\{1\}}}^{(D_1)}(\mbox{\boldmath $g$})\! \triangleq \nonumber \\
		 && \mathbb{P} \Big[ \exists \tilde{\mbox{\boldmath $g$}} \in \mathcal{R}_{1c}, \tilde{g}_{1}=g_1 \ \mathrm{s.t.} \ \ell_{\mbox{\scriptsize \boldmath $g$}} \left({X}_{1}^{(N)}, Y^{(N)} \right)  \leq  \gamma_{\{1\}} \Big], \nonumber
	\end{eqnarray}

and when $\mbox{\boldmath $g$} \in \mathcal{R}_{1c}$, the miss-detection probability is defined as
\begin{eqnarray}
		&& \!\! P_{m_{\{1\}}}^{(D_1)}(\mbox{\boldmath $g$}) \triangleq \mathbb{P}\Big[\exists \tilde{\mbox{\boldmath $g$}} \in \mathcal{R}_{1}, (\tilde{w}_{1},\tilde{g}_{1})=(w_1,g_1) \ \mathrm{s.t.} \ \nonumber \\
		&& \ \ \ \ \ \ \ \ \ \ \ \ \ \ \ \ \ \ \ \ \ \ \ \ \ \ \ \ \ \ \ \ \ \ \ \ell_{\tilde{\mbox{\scriptsize \boldmath $g$}}}\! \left(\tilde{X}_{1}^{(N)}, Y^{(N)}\! \right)\! >\!  \gamma_{\{1\}} \Big]. \nonumber	
\end{eqnarray}

By definition of $\mathrm{GEP}_{1}$ in (\ref{GEP1}), we obtain
\begin{eqnarray}
		&& \!\!\!\! {GEP}_{1} = \sum_{\mbox{\scriptsize \boldmath $g$} \in \mathcal{R}_{1}}P(\mbox{\boldmath $g$}) \left( P_{i_{\{\}}}^{(D_1)}(\mbox{\boldmath  $g$}) + P_{c_{\{\}}}^{(D_1)}(\mbox{\boldmath $g$})\right) \label{GEP1_in_proof}\\
		&& \!\!\!\! + \sum_{\mbox{\scriptsize \boldmath $g$}\in \widehat{\mathcal{R}}_{1}\cup \mathcal{R}_{1c}}P(\mbox{\boldmath $g$})P_{m_{\{\}}}^{(D_1)}(\mbox{\boldmath $g$})  \nonumber\\
		&&  \!\!\!\! + \sum_{\mbox{\scriptsize \boldmath $g$}\in \mathcal{R}_{1}}P(\mbox{\boldmath $g$}) P_{m_{\{1\}}}^{(D_1)}(\mbox{\boldmath $g$}) + \sum_{\mbox{\scriptsize \boldmath $g$}\in \mathcal{R}_{1c}}p(\mbox{\boldmath $g$})P_{c_{\{1\}}}^{(D_1)}(\mbox{\boldmath $g$}). \nonumber
\end{eqnarray}

Using the RCU bound, analogous to Theorem~\ref{GEP12}, we upper bound the incorrect decoding probability of decoder $D_1$ as
\begin{eqnarray}
	&& \sum_{\mbox{\scriptsize \boldmath $g$}\in \mathcal{R}_{1}}P(\mbox{\boldmath $g$})P_{i_{\{\}}}^{(D_1)}(\mbox{\boldmath $g$})\leq \nonumber\\
	&& \sum_{\mbox{\scriptsize \boldmath $g$}\in \mathcal{R}_{1}}\sum_{\tilde{\mbox{\scriptsize \boldmath $g$}}\in \mathcal{R}_{1}} P(\mbox{\boldmath $g$}) \mathbb{E}\Bigg[\min \Bigg\{\frac{1}{| \mathcal{R}_{1}|}, 2^{Nr_{\tilde{g}_{1}}} \times \label{IncDecThem3} \\
	&& \!\!\!\!\!\!\!\!\!\!\! \mathbb{P}\left[\ell_{\tilde{\mbox{\scriptsize \boldmath $g$}}}\!\left (\tilde{X}_{1}^{(N)}, Y^{(N)}\right)\! >\!  \ell_{\mbox{\scriptsize \boldmath $g$}}\! \left(X_{1}^{(N)},Y^{(N)}\right)\middle \vert X_{1}^{(N)}\! , Y^{(N)}\right]\!\! \Bigg\}\Bigg].\nonumber
\end{eqnarray}

To upper bound the combined collision and miss-detection probability for \textit{Case $\{\}$}, we adopt the same approach as in Theorem~\ref{GEP12}, applying index swapping and Markov’s inequality, which yields
\begin{eqnarray}
	&& \!\!\!\! \sum_{\mbox{\scriptsize \boldmath $g$}\in \mathcal{R}_{1}}P(\mbox{\boldmath $g$})P_{c_{\{\}}}^{(D_1)}(\mbox{\boldmath $g$})+ \!\!\! \sum_{\mbox{\scriptsize \boldmath $g$}\in \widehat{\mathcal{R}}_{1}\cup \mathcal{R}_{1c}}P(\mbox{\boldmath $g$})P_{m_{\{\}}}^{(D_1)}(\mbox{\boldmath $g$}) \leq \nonumber\\
	&& \!\!\!\! 2 \sum_{\mbox{\scriptsize \boldmath $g$}\in \mathcal{R}_{1}}\sum_{\tilde{\mbox{\scriptsize \boldmath $g$}}\in \widehat{\mathcal{R}}_{1}\cup \mathcal{R}_{1c}}\!\!\!\!\! \min \Bigg\{P(\mbox{\boldmath $g$}), P(\tilde{\mbox{\boldmath $g$}})2^{Nr_{g_1}}\times \nonumber\\
	&& \ \ \ \ \ \ \ \ \ \ \ \ \ \ \ \ \ \ \ \ \ \ \ \mathbb{E}\Bigg[\frac{P_{\tilde{g}_{1}}\left(Y^{(N)}\middle \vert g_2\right)}{P\left(Y^{(N)}\middle \vert X_{1}^{(N)},g_2\right)}\Bigg]\Bigg\}, \label{MissColThm3Empty}
\end{eqnarray}
	with optimal threshold
	\begin{eqnarray}
		\gamma_{\{ \}}^{*}\left(\tilde{\mbox{\boldmath $g$}},Y^{(N)}\right) = P(\tilde{\mbox{\boldmath $g$}})P_{\tilde{\mbox{\scriptsize \boldmath $g$}}}\left(Y^{(N)}\right),\nonumber
	\end{eqnarray}
where $\tilde{\mbox{\boldmath $g$}} = [\tilde{g}_{1},g_2]^{\top}$.

Finally, we bound the terms corresponding to \textit{Case $\{1\}$} in (\ref{GEP1_in_proof}) by

\begin{eqnarray}
		 && \!\!\!\!\!\!\!\!\!\! \sum_{\mbox{\scriptsize \boldmath $g$}\in \mathcal{R}_{1c}}P(\mbox{\boldmath $g$})P_{m_{\{1\}}}^{(D_1)}(\mbox{\boldmath $g$})+ \sum_{\mbox{\scriptsize \boldmath $g$}\in \mathcal{R}_{1}}P(\mbox{\boldmath $g$})P_{c_{\{1\}}}^{(D_1)}(\mbox{\boldmath $g$}) \leq \nonumber\\
		 && \!\!\!\!\!\!\!\!\!\! 2 \sum_{\mbox{\scriptsize \boldmath $g$}\in \mathcal{R}_{1}}\!\!\!\! \sum_{\scriptsize\begin{array}{c}\tilde{\mbox{\boldmath $g$}}\in \mathcal{R}_{1c} \\ \tilde{g}_{1}=g_1 \end{array}}\!\!\!\!\!\! \min \Bigg\{ P(\mbox{\boldmath $g$}), P(\tilde{\mbox{\boldmath $g$}}) \mathbb{E}\Bigg[\frac{P_{\tilde{\mbox{\scriptsize \boldmath $g$}}}\left(Y^{(N)}\middle \vert X_{1}^{(N)}\right)}{P\left(Y^{(N)}\middle \vert X_{1}^{(N)},g_2\right)}\Bigg]\Bigg\}, \nonumber \\
		 && \label{MissColThm3Case1}
\end{eqnarray}

with optimal threshold
\begin{eqnarray}
	&& \gamma_{\{1\}}^{*}\left(\tilde{\mbox{\boldmath $g$}},X_{1}^{(N)},Y^{(N)}\right) = \nonumber\\
	&&\ \ \ \ \ \ \ \ \ \ \ \ \  \ \ \ \ \  2^{-Nr_{g_1}}P(\tilde{\mbox{\boldmath $g$}})P_{\tilde{\mbox{\scriptsize \boldmath $g$}}}\left(Y^{(N)}\middle \vert X_{1}^{(N)}\right),\nonumber
\end{eqnarray}
where $\tilde{\mbox{\boldmath $g$}} = [g_1,\tilde{g}_{2}]^{\top}$.

Having upper bounded all terms in $\mathrm{GEP}_{1}$, substituting (\ref{IncDecThem3}), (\ref{MissColThm3Empty}), and (\ref{MissColThm3Case1}) into (\ref{GEP1_in_proof}) completes the proof.
\end{proof}

\section{Proof of Theorem \ref{Theorem1UserDecodingD}} \label{AppGEPD}
\begin{proof}
For a fixed $S \subseteq D$, let
\begin{equation*}
\mbox{\boldmath $X$}_{S}^{(N)} = \left [X_{i}^{(N)}(w_{ij}, g_{ij})\right]_{[i,j]\in {S}}
\end{equation*}
denote the list of codewords corresponding to $\mbox{\boldmath $w$}_{S} = [w_{ij}]_{[i,j]\in S}$, and let $\mbox{\boldmath $g$}_{S} = \left[g_{ij}\right]_{[i,j]\in S}$ be the associated code index list. The elements of $\mbox{\boldmath $w$}_{S}$ and $\mbox{\boldmath $g$}_{S}$ are listed in the same order of their subscripts. For simplicity, we abbreviate $X_{i}^{(N)}(w_{ij}, g_{ij})$ as $X_{ij}^{(N)}$ throughout the proof.

Given channel output $Y^{(NL+T_2)}$ and a fixed $D \subseteq U$, we define the weighted likelihood as

\begin{eqnarray}
		&& \!\!\!\!\!\!\! \ell_{\mbox{\scriptsize \boldmath $g$}_{U}}\left(\mbox{\boldmath $X$}_{D}^{(N)},Y^{(NL+T_2)}\right) \triangleq \nonumber \\
		&& P \left(Y^{(NL+T_2)}  \middle \vert \mbox{\boldmath $X$}_{D}^{(N)},\mbox{\boldmath $g$}_{U \backslash D} \right) P(\mbox{\boldmath $g$}_{U})2^{-N\sum_{[i,j]\in D}r_{g_{ij}}}. \nonumber
\end{eqnarray}
To decode $\mbox{\boldmath $w$}_{D}$, sub-decoder $D_D$ establishes the following constraint set for all $S \subseteq D$:

\begin{eqnarray}
		&& \!\!\!\!\!\!\!\! \mathcal{S}_{S} = \Big\{ (\mbox{\boldmath $w$}_{D},\mbox{\boldmath $g$}_{U}) : \mbox{\boldmath $g$}_{U} \in \mathcal{R}_{D}, \tilde{\mbox{\boldmath $g$}}_{U}\in \widehat{\mathcal{R}}_{D}\cup \mathcal{R}_{Dc}, \tilde{\mbox{\boldmath $g$}}_{S}=\mbox{\boldmath $g$}_{S},\nonumber\\  &&\ \ \ \ \ \ \ \ \ \ \ \ \ \ \ \ \ \ \ \ \ \ \ \ \ \ \ \ \ell_{\mbox{\scriptsize \boldmath $g$}_{U}}\left(\mbox{\boldmath $X$}_{D}^{(N)},Y^{(NL+T_2)}\right) > \gamma_{S}\Big\},\nonumber
\end{eqnarray}
where $\gamma_S$ is a pre-determined threshold whose optimal value and parameter dependencies are derived in the course of this proof.

The next decoding step is performed by maximizing the weighted likelihood over the intersection of all constraint sets:
\begin{eqnarray}
(\hat{\mbox{\boldmath $w$}}_{D}, \hat{\mbox{\boldmath $g$}}_{U}) = \underset{
  \scriptsize
  \begin{array}{@{}c@{}}
    (\mbox{\boldmath $w$}_{D}, \mbox{\boldmath $g$}_{U}) \in \mathcal{S}_{I}^{(D_{D})} \\
    \mathcal{S}_{I}^{(D_{D})}  \neq \emptyset
  \end{array} }
{\arg\max} \ell_{\mbox{\scriptsize \boldmath $g$}_{U}}\left(\mbox{\boldmath $X$}_{D}^{(N)}, Y^{(NL+T_2)}\right),
\end{eqnarray}
with
	\begin{eqnarray}
		\mathcal{S}_{I}^{(D_D)} = \bigcap_{S \subseteq D} \mathcal{S}_{S}.
	\end{eqnarray}

\subsection*{Error probabilities}
For a fixed $S \subset D$, we define incorrect decoding, collision, and miss-detection probabilities for the case where the subset $S$ of messages is decoded successfully while the remaining $D \setminus S$ are decoded erroneously. Analogous to Theorem~\ref{GEP12}:
	\begin{itemize}
	\item for $\mbox{\boldmath $g$}_{U}\in \mathcal{R}_{D}$, we define the \emph{incorrect decoding probability} as		
		\begin{eqnarray}
			&& \!\!\!\!\!\!\!\!\!\!\!\!\!\! P_{iS}(\mbox{\boldmath $g$}_{U}) \triangleq \mathbb{P}\Big[ \exists \tilde{\mbox{\boldmath  $w$}}_{D \backslash S}, \exists \tilde{\mbox{\boldmath $g$}}_{U}\in \mathcal{R}_{D}, (\tilde{\mbox{\boldmath $w$}}_{S}, \tilde{\mbox{\boldmath $g$}}_{S}) = (\mbox{\boldmath $w$}_{S}, \mbox{\boldmath $g$}_{S}) \nonumber \\ 
			&& \!\!\!\!\!\!\!\!\!\!\!\!\!\! \mathrm{s.t.} \ \ell_{\tilde{\mbox{\scriptsize \boldmath $g$}}_{U}}\left(\tilde{\mbox{\boldmath $X$}}_{D}^{(N)}, Y^{(NL+T_2)}\right)\! > \! \ell_{\mbox{\scriptsize \boldmath $g$}_{U}}\left(\mbox{\boldmath $X$}_{D}^{(N)}, Y^{(NL+T_2)}\right)\Big],\nonumber
		\end{eqnarray}
	\item for $\mbox{\boldmath $g$}_{U}\in \mathcal{R}_{D}$, the \emph{collision probability} as		
		\begin{eqnarray}
			&& \!\!\!\!\!\!\!\!\!\!\!\!\!\! P_{cS}(\mbox{\boldmath $g$}_{U}) \triangleq \mathbb{P}\Big[\exists \tilde{\mbox{\boldmath $g$}}_{U}\in \widehat{\mathcal{R}}_{D}\cup \mathcal{R}_{Dc}, \tilde{\mbox{\boldmath $g$}}_{S}=\mbox{\boldmath $g$}_{S} \nonumber \\
			&&\ \ \ \ \ \ \ \ \ \ \ \ \ \ \ \ \ \ \mathrm{s.t.}\ \ell_{\mbox{\scriptsize \boldmath $g$}_{U}}\left(\mbox{\boldmath $X$}_{D}^{(N)}, Y^{(NL+T_2)}\right)\leq \gamma_S \Big],\nonumber
		\end{eqnarray}
	\item for $\mbox{\boldmath $g$}_{U}\in \widehat{\mathcal{R}}_{D}\cup \mathcal{R}_{Dc}$, the \emph{miss-detection probability} as

		\begin{eqnarray}
			&& \!\!\!\!\!\!\!\!\!\!\!\!\!\!\! P_{mS}(\mbox{\boldmath $g$}_{U}) \triangleq \mathbb{P}\Big[\exists \tilde{\mbox{\boldmath $w$}}_{D \backslash S}, \exists \tilde{\mbox{\boldmath $g$}}_{U}\in \mathcal{R}_{D}, (\tilde{\mbox{\boldmath $w$}}_{S}, \tilde{\mbox{\boldmath $g$}}_{S}) = (\mbox{\boldmath $w$}_{S}, \mbox{\boldmath $g$}_{S}) \nonumber \\
			&&\ \ \ \ \ \ \ \ \ \ \ \ \ \ \ \ \ \  \mathrm{s.t.} \ \ell_{\tilde{\mbox{\scriptsize \boldmath $g$}}_{U}}\left(\tilde{\mbox{\boldmath $X$}}_{D}^{(N)}, Y^{(NL+T_2)}\right) > \gamma_S \Big].\nonumber
		\end{eqnarray}
	\end{itemize}

For $S=D$, we also define:
\begin{itemize}
    \item the \emph{collision probability} when $\mbox{\boldmath $g$}_{U}\in \mathcal{R}_{D}$,
	\begin{eqnarray}
		&& \!\!\!\!\!\!\!\!\!\!\!\!\!\!\! P_{cD}(\mbox{\boldmath $g$}_{U}) \triangleq \mathbb{P}\Big[ \exists \tilde{\mbox{\boldmath $g$}}_{U} \in \mathcal{R}_{Dc}, \tilde{\mbox{\boldmath $g$}}_{D}=\mbox{\boldmath $g$}_{D} \ \mathrm{s.t.} \nonumber \\
		&& \ \ \ \ \ \ \ \ \ \ \ \ \ \ \ \ \ \ \ \ \ \ell_{\mbox{\scriptsize \boldmath $g$}_{U}}\left(\mbox{\boldmath $X$}_{D}^{(N)},Y^{(NL+T_2)}\right) \leq \gamma_D\Big].\nonumber
	\end{eqnarray}
    \item and the \emph{miss-detection probability} when $\mbox{\boldmath $g$}_{U}\in \mathcal{R}_{Dc}$,
	\begin{eqnarray}
		&& \!\!\!\!\!\!\!\!\!\!\!\!\!\!\! P_{mD}(\mbox{\boldmath $g$}_{U}) \triangleq \mathbb{P}\Big[ \exists \tilde{\mbox{\boldmath $g$}}_{U} \in \mathcal{R}_{D}, (\tilde{\mbox{\boldmath $w$}}_{D},\tilde{\mbox{\boldmath $g$}}_{D}) = (\mbox{\boldmath $w$}_{D},\mbox{\boldmath $g$}_{D}) \ \mathrm{s.t.} \nonumber \\
		&& \ \ \ \ \ \ \ \ \ \ \ \ \ \ \ \ \ \ \ \ \ \ell_{\tilde{\mbox{\scriptsize \boldmath $g$}}_{U}}\left(\tilde{\mbox{\boldmath $X$}}_{D}^{(N)}, Y^{(NL+T_2)}\right)> \gamma_D\Big],\nonumber
	\end{eqnarray}
\end{itemize}

By (\ref{ExtendedGEPM}), the generalized error performance for decoder $D_D$ can be expressed as
\begin{eqnarray}
		&& {GEP}_{D} = \sum_{S \subset D}\Bigg(\sum_{\mbox{\scriptsize \boldmath $g$}_{U}\in \mathcal{R}_{D}}P(\mbox{\boldmath $g$}_{U})\Big(P_{iS}(\mbox{\boldmath $g$}_{U})+P_{cS}(\mbox{\boldmath $g$}_{U})\Big) \nonumber \\
	    && \ \ \ \ \ \ \ \ \ \ \ \ \ \ \ \ \ \ \ \ \ \ \ \ \  + \sum_{\mbox{\scriptsize \boldmath $g$}_{U}\in \widehat{\mathcal{R}}_{D}\cup \mathcal{R}_{Dc}}P(\mbox{\boldmath $g$}_{U})P_{mS}(\mbox{\boldmath $g$}_{U})\Bigg)  \nonumber \\
	    && \!\!\!\!\!\!\!\!\! + \sum_{\mbox{\scriptsize \boldmath $g$}_{U}\in \mathcal{R}_{D}}P(\mbox{\boldmath $g$}_{U})P_{cD}(\mbox{\boldmath $g$}_{U}) + \sum_{\mbox{\scriptsize \boldmath $g$}_{U}\in \mathcal{R}_{Dc}}P(\mbox{\boldmath $g$}_{U})P_{mD}(\mbox{\boldmath $g$}_{U}). \label{GEP_D_in_Proof}
\end{eqnarray}

\subsection*{Bounding procedure}
To upper bound each term in (\ref{GEP_D_in_Proof}), let
\begin{equation*}
\mbox{\boldmath $X$}_{D}^{(N)} = \left[\mbox{\boldmath $X$}_{S}^{(N)}, \mbox{\boldmath $X$}_{D\backslash S}^{(N)}\right], \quad S \subset D.
\end{equation*}

For the \emph{incorrect decoding probability}, we apply the RCU bound, yielding
\begin{eqnarray}
		&& \!\!\!\!\!\!\! \sum_{\mbox{\scriptsize \boldmath $g$}_{U}\in \mathcal{R}_{D}}P(\mbox{\boldmath $g$}_{U}) P_{iS}(\mbox{\boldmath $g$}_{U}) \leq \nonumber \\
		&& \!\!\!\!\!\!\! \sum_{\mbox{\scriptsize \boldmath $g$}_{U}\in \mathcal{R}_{D}} P(\mbox{\boldmath $g$}_{U}) \mathbb{E}\Bigg[ \min \Bigg\{1,\sum_{\scriptsize\begin{array}{c}\tilde{\mbox{\boldmath $g$}}_{U}\in \mathcal{R}_{D} \\ \tilde{\mbox{\scriptsize \boldmath $g$}}_{S}=\mbox{\scriptsize \boldmath  $g$}_{S}\end{array}}\!\! 2^{N\sum_{[i,j]\in D \backslash S}r_{\tilde{g}_{ij}}}\times \nonumber\\
		&& \mathbb{P}\Big[\ell_{\tilde{\mbox{\scriptsize \boldmath $g$}}_{U}}\!\! \left(\mbox{\boldmath $X$}_{S}^{(N)}, \tilde{\mbox{\boldmath $X$}}_{D\backslash S}^{(N)},Y^{(NL+T_2)}\! \right) \! > \nonumber\\
		&& \ \ \ \ \ \ell_{\mbox{\scriptsize \boldmath $g$}_{U}}\left(\mbox{\boldmath $X$}_{D}^{(N)},Y^{(NL+T_2)}\right) \Big \vert \mbox{\boldmath $X$}_{D}^{(N)},Y^{(NL+T_2)}\Big]\Bigg\}\Bigg]\nonumber\\
		&& \!\!\!\!\!\!\! = \sum_{\mbox{\scriptsize \boldmath $g$}_{U}\in \mathcal{R}_{D}}\!\!\! \sum_{\scriptsize\begin{array}{c}\tilde{\mbox{\boldmath $g$}}_{U}\in \mathcal{R}_{D} \\ \tilde{\mbox{\scriptsize \boldmath $g$}}_{S}=\mbox{\scriptsize \boldmath  $g$}_{S}\end{array}} \!\!\!\!\!\!\! P(\mbox{\boldmath $g$}_{U}) \mathbb{E}\Bigg[\min \Bigg\{\frac{1}{|\mathcal{R}_{D_{\mbox{\scriptsize \boldmath $g$}_{S}}}|}, 2^{N\sum_{[i,j]\in D \backslash S}r_{\tilde{g}_{ij}}} \times \nonumber \\
		&& \mathbb{P}\Big[\ell_{\tilde{\mbox{\scriptsize \boldmath $g$}}_{U}}\left(\mbox{\boldmath $X$}_{S}^{(N)},\tilde{\mbox{\boldmath $X$}}_{D \backslash S}^{(N)}, Y^{(NL+T_2)}\right)> \nonumber \\
		&&\ \ \ \ \ell_{\mbox{\scriptsize \boldmath $g$}_{U}}\left(\mbox{\boldmath $X$}_{D}^{(N)}, Y^{(NL+T_2)}\right)\Big \vert \mbox{\boldmath $X$}_{D}^{(N)}, Y^{(NL+T_2)}\Big]\Bigg\}\Bigg], \label{IncDecThm5}
\end{eqnarray}
	with expectation taken with respect to
	\begin{eqnarray}
		&& \!\!\!\!\!\!\!\!\!\!\!\!\! P\left(\mbox{\boldmath $X$}_{D}^{(N)},Y^{(NL+T_2)}\right)=\nonumber \\
		&& \ \ \ \ \ \ P\left(Y^{(NL+T_2)}\middle \vert \mbox{\boldmath $X$}_{D}^{(N)}\right) P_{\mbox{\scriptsize \boldmath $g$}_{U}}\left(\mbox{\boldmath $X$}_{D}^{(N)}\right),\label{JointDistributionbarD}
	\end{eqnarray}
and $\mathcal{R}_{D_{\mbox{\scriptsize \boldmath  $g$}_{S}}} = \left\{\tilde{\mbox{\boldmath $g$}}_{S} \in \mathcal{R}_{D}: \tilde{\mbox{\boldmath $g$}}_{S} = \mbox{\boldmath $g$}_{S}\right\}$.

For the \emph{combined collision and miss-detection probability}, we apply the union bound, index swapping, and a change of measure \cite[Proposition~18.3]{IT2025}), leading to

	\begin{eqnarray}
		&& \!\!\!\! \sum_{\mbox{\scriptsize \boldmath $g$}_{U}\in \mathcal{R}_{D}}P(\mbox{\boldmath $g$}_{U})P_{cS}(\mbox{\boldmath $g$}_{U}) + \sum_{\mbox{\scriptsize \boldmath $g$}_{U}\in \widehat{\mathcal{R}}_{D}\cup \mathcal{R}_{Dc}}P(\mbox{\boldmath $g$}_{U})P_{mS}(\mbox{\boldmath $g$}_{U}) \leq \nonumber \\
		&& \sum_{\mbox{\scriptsize \boldmath $g$}_{U}\in \mathcal{R}_{D}} \sum_{\scriptsize\begin{array}{c}\tilde{\mbox{\boldmath $g$}}_{U}\in \widehat{\mathcal{R}}_{D}\cup \mathcal{R}_{Dc} \\ \tilde{\mbox{\scriptsize \boldmath $g$}}_{S}=\mbox{\scriptsize \boldmath  $g$}_{S}\end{array}}\!\!\!\!\!\! \Bigg ( P(\mbox{\boldmath $g$}_{U}) \times\nonumber\\
		&& \ \ \ \ \ \ \ \ \ \ \mathbb{P}\Big[\ell_{\mbox{\scriptsize \boldmath $g$}_{U}}\left(\left[\mbox{\boldmath $X$}_{S}^{(N)},\mbox{\boldmath $X$}_{D \backslash S}^{(N)}\right], Y^{(NL+T_2)} \right) \leq \gamma_S\Big] \nonumber \\
		&&  +  P(\tilde{\mbox{\boldmath $g$}}_{U}) 2^{N\sum_{[i,j]\in D\backslash S}r_{g_{ij}}} \times \nonumber\\
	    && \ \ \ \ \ \ \ \ \ \ \ \mathbb{P}\Big[\ell_{\mbox{\scriptsize \boldmath $g$}_{U}} \left(\left[\mbox{\boldmath $X$}_{S}^{(N)}, \tilde{\mbox{\boldmath $X$}}_{D \backslash S}^{(N)}\right],Y^{(NL+T_2)} \right) > \gamma_S \Big]\Bigg)\nonumber
	\end{eqnarray}
	\begin{eqnarray}
		&& = \sum_{\mbox{\scriptsize \boldmath $g$}_{U}\in \mathcal{R}_{D}} \sum_{\scriptsize\begin{array}{c}\tilde{\mbox{\boldmath $g$}}_{U}\in \widehat{\mathcal{R}}_{D}\cup \mathcal{R}_{Dc} \\ \tilde{\mbox{\scriptsize \boldmath $g$}}_{S}=\mbox{\scriptsize \boldmath  $g$}_{S}\end{array}}  \!\!\!\!\! \Bigg( P(\mbox{\boldmath  $g$}_{U})\times \nonumber \\
		&& \ \ \ \ \ \ \ \ \ \ \ \ \ \ \  \ \ \ \mathbb{E}\left[\mathbf{1}\left\{\ell_{\mbox{\scriptsize \boldmath $g$}_{U}}\left(\mbox{\boldmath $X$}_{D}^{(N)}, Y^{(NL+T_2)} \right) \leq  \gamma_S \right\}\right] \nonumber\\
		&& + P(\tilde{\mbox{\boldmath $g$}}_{U})  2^{N\sum_{[i,j]\in D\backslash S}r_{g_{ij}}}\times \nonumber\\
		&&\  \mathbb{E}\left[\mathbf{1} \left\{\ell_{\mbox{\scriptsize \boldmath $g$}_{U}} \left([\mbox{\boldmath $X$}_{S}^{(N)},\tilde{\mbox{\boldmath $X$}}_{D\backslash S}^{(N)}], Y^{(NL+T_2)} \right)  >  \gamma_S  \right\} \right] \Bigg) \label{**} \\		
		&& = \sum_{\mbox{\scriptsize \boldmath $g$}_{U}\in \mathcal{R}_{D}} \!\! \sum_{\scriptsize\begin{array}{c}\tilde{\mbox{\boldmath $g$}}_{U}\in \widehat{\mathcal{R}}_{D}\cup \mathcal{R}_{Dc} \\ \tilde{\mbox{\scriptsize \boldmath $g$}}_{S}=\mbox{\scriptsize \boldmath  $g$}_{S}\end{array}} \nonumber\\
		&& \mathbb{E} \Bigg[P(\mbox{\boldmath $g$}_{U}) \mathbf{1}\left\{\ell_{\mbox{\scriptsize \boldmath  $g$}_{U}} \left(\mbox{\boldmath $X$}_{D}^{(N)},Y^{(NL+T_2)} \right) \leq \gamma_S \right\} \label{CollMissThem5_1} \\
		&& \ \ \ \ \ \ \ +\frac{P_{\tilde{\mbox{\scriptsize \boldmath $g$}}_{U}}\left(Y^{(NL+T_2)}\middle \vert \mbox{\boldmath $X$}_{S}^{(N)}\right)P(\mbox{\boldmath $g$}_{U})P(\tilde{\mbox{\boldmath $g$}}_{U})}{2^{N\sum_{[i,j]\in S}r_{g_{ij}}}\ell_{\mbox{\scriptsize \boldmath $g$}_{U}}\left(\mbox{\boldmath $X$}_{D}^{(N)},Y^{(NL+T_2)}\right)}\times \nonumber \\
		&& \ \ \ \ \ \ \ \ \ \ \ \ \ \ \  \mathbf{1}\left\{\ell_{\mbox{\scriptsize \boldmath $g$}_{U}}\left(\mbox{\boldmath $X$}_{D}^{(N)}, Y^{(NL+T_2)} \right) > \gamma_S\right\}\Bigg],\label{CollMissThem5_3}
	\end{eqnarray}
where the triple $\big(\mbox{\boldmath $X$}_{D}^{(N)}, \tilde{\mbox{\boldmath $X$}}_{D\setminus S}^{(N)}, Y^{(NL+T_2)}\big)$ in (\ref{**}) follows distribution	
	\begin{eqnarray}
		 && P\left(\mbox{\boldmath $X$}_{D}^{(N)}, \tilde{\mbox{\boldmath $X$}}_{D\setminus S}^{(N)}, Y^{(NL+T_2)} \right) = \nonumber \\
		 && P_{\tilde{\mbox{\scriptsize \boldmath $g$}}_{U}}\left(Y^{NL+T_2}\middle \vert \mbox{\boldmath $X$}_{D}^{(N)}\right)P_{\tilde{\mbox{\scriptsize \boldmath $g$}}_{U}}\left(\mbox{\boldmath $X$}_{D}^{(N)}\right)  P_{\mbox{\scriptsize \boldmath $g$}_{U}}\left(\tilde{\mbox{\boldmath $X$}}_{D \setminus S}^{(N)}\right), \nonumber
	\end{eqnarray}
	and $P_{\tilde{\mbox{\scriptsize \boldmath $g$}}_{U}}\left(Y^{(NL+T_2)}\middle \vert \mbox{\boldmath $X$}_{S}^{(N)}\right)$ is the conditional probability of $Y^{(NL+T_2)}$ given the input sequence $\mbox{\boldmath $X$}_{S}^{(N)}$ and code index vector $\tilde{\mbox{\boldmath $g$}}_{U} = \left [\mbox{\boldmath $g$}_{S}, \tilde{\mbox{\boldmath $g$}}_{U \backslash S}\right]^{\top}$.
	
The optimal threshold is then derived from (\ref{CollMissThem5_1}) and (\ref{CollMissThem5_3}) as \cite[Theorem~18.6]{IT2025})

	\begin{eqnarray}
		 && \gamma_{S}^{*}\left(\tilde{\mbox{\boldmath $g$}}_{U}, \mbox{\boldmath $X$}_{S}^{(N)},Y^{(NL+T_2)}\right) =  \nonumber \\
		 && P(\tilde{\mbox{\boldmath $g$}}_{U}) 2^{-N\sum_{[i,j]\in S}r_{g_{ij}}}P_{\tilde{\mbox{\scriptsize \boldmath $g$}}_{U}}\left(Y^{(NL+T_2)}\middle \vert \mbox{\boldmath $X$}_{S}^{(N)}\right).\label{OptimalK_S}
	\end{eqnarray}
Substituting (\ref{OptimalK_S}) into (\ref{CollMissThem5_1}) and (\ref{CollMissThem5_3}) and applying Markov’s inequality to (\ref{CollMissThem5_1}), while noting that the indicator function is bounded above by one, yields the final bound:

	\begin{eqnarray}
		&& \!\!\! \sum_{\mbox{\scriptsize \boldmath $g$}_{U}\in \mathcal{R}_{D}}P(\mbox{\boldmath $g$}_{U})P_{cS}(\mbox{\boldmath $g$}_{U}) + \sum_{\mbox{\scriptsize \boldmath $g$}_{U}\in \widehat{\mathcal{R}}_{D}\cup \mathcal{R}_{Dc}}P(\mbox{\boldmath $g$}_{U})P_{mS}(\mbox{\boldmath $g$}_{U}) \leq \nonumber \\
		&&	\!\!\! 2\sum_{\mbox{\scriptsize \boldmath $g$}_{U}\in \mathcal{R}_{D}} \sum_{\tilde{\mbox{\scriptsize \boldmath $g$}}_{U}\in \widehat{\mathcal{R}}_{D}\cup \mathcal{R}_{Dc}}\!\!\!\! \min \Bigg\{P(\mbox{\boldmath $g$}_{U}), P(\tilde{\mbox{\boldmath $g$}}_{U}) 2^{N\sum_{[i,j]\in D\backslash S}r_{g_{ij}}} \times \nonumber \\
		&& \ \ \ \ \ \ \ \ \ \ \ \ \ \ \ \ \ \ \ \ \ \mathbb{E}\left[\frac{P_{\tilde{\mbox{\scriptsize  \boldmath $g$}}_{U}}\left(Y^{(NL+T_2)}\middle \vert \mbox{\boldmath $X$}_{S}^{(N)}\right)}{P\left(Y^{(NL+T_2)}\middle \vert \mbox{\boldmath $X$}_{D}^{(N)}, \mbox{\boldmath $g$}_{U \backslash D}\right)}\right]\Bigg\}.\label{CollMiss_Thm5_final} 	
	\end{eqnarray}

\subsection*{Final step}
Thus far, we have bounded all terms associated with $S \subset D$ in (\ref{GEP_D_in_Proof}). The terms for $S=D$ are bounded as the following

\begin{eqnarray}
		&& \sum_{\mbox{\scriptsize \boldmath $g$}_{U}\in \mathcal{R}_{D}}P(\mbox{\boldmath $g$}_{U})P_{cD}(\mbox{\boldmath $g$}_{U}) + \sum_{\mbox{\scriptsize \boldmath $g$}_{U}\in \mathcal{R}_{Dc}}P(\mbox{\boldmath $g$}_{U})P_{mD}(\mbox{\boldmath $g$}_{U})\leq \nonumber\\
		&& \sum_{\mbox{\scriptsize \boldmath $g$}_{U}\in \mathcal{R}_{D}}\!\!\!\! \sum_{\scriptsize\begin{array}{c}\tilde{\mbox{\boldmath $g$}}_{U}\in \mathcal{R}_{Dc} \\ \tilde{\mbox{\scriptsize \boldmath $g$}}_{D}=\mbox{\scriptsize \boldmath  $g$}_{D}\end{array}}\!\!\!\!\! P(\mbox{\boldmath  $g$}_{U})\mathbb{P}\left[\ell_{\mbox{\scriptsize \boldmath $g$}_{U}}\left(\mbox{\boldmath  $X$}_{D}^{(N)},Y^{(NL+T_2)}\right)\leq \gamma_D\right] \nonumber \\ 	        		&& +\!\!\!\! \sum_{\mbox{\scriptsize \boldmath $g$}_{U}\in \mathcal{R}_{Dc}}\!\!\!\! \sum_{\scriptsize\begin{array}{c}\tilde{\mbox{\boldmath $g$}}_{U}\in \mathcal{R}_{D} \\ \tilde{\mbox{\scriptsize \boldmath $g$}}_{D}=\mbox{\scriptsize \boldmath  $g$}_{D}\end{array}}\!\!\!\!\!\! P(\mbox{\boldmath $g$}_{U})\mathbb{P}\left[\ell_{\tilde{\mbox{\scriptsize \boldmath  $g$}}_{U}}\left(\mbox{\boldmath $X$}_{D}^{(N)},Y^{(NL+T_2)}\right)> \gamma_D\right]\nonumber\\
		&& = \!\!\!\!\! \sum_{\mbox{\scriptsize \boldmath $g$}_{U}\in \mathcal{R}_{D}} \!\!\!\! \sum_{\scriptsize\begin{array}{c}\tilde{\mbox{\boldmath $g$}}_{U}\in \mathcal{R}_{Dc} \\ \tilde{\mbox{\scriptsize \boldmath $g$}}_{D}=\mbox{\scriptsize \boldmath  $g$}_{D}\end{array}} \!\!\!\!\!\! \Bigg( P(\mbox{\boldmath $g$}_{U}) \mathbb{P}\left[\ell_{\mbox{\scriptsize \boldmath $g$}_{U}} \left(\mbox{\boldmath $X$}_{D}^{(N)},Y^{(NL+T_2)} \right) \leq \gamma_D\right] \nonumber\\
		&& \ \ \ \ \ \ \ \ \ + P(\tilde{\mbox{\boldmath  $g$}}_{U}) \mathbb{P}\left[\ell_{\mbox{\scriptsize \boldmath $g$}_{U}}\left(\mbox{\boldmath $X$}_{D}^{(N)},Y^{(NL+T_2)}\right)> \gamma_D\right]\Bigg)\label{Swapp2Thm5_D}\\
		&& = \!\!\!\!\! \sum_{\mbox{\scriptsize \boldmath $g$}_{U}\in \mathcal{R}_{D}}\!\!\!\! \sum_{\scriptsize\begin{array}{c}\tilde{\mbox{\boldmath $g$}}_{U}\in \mathcal{R}_{Dc} \\ \tilde{\mbox{\scriptsize \boldmath $g$}}_{D}=\mbox{\scriptsize \boldmath  $g$}_{D}\end{array}} \!\!\!\!\!\! \mathbb{E}\Bigg[P(\mbox{\boldmath $g$}_{U})\mathbf{1}\left \{ \ell_{\mbox{\scriptsize \boldmath $g$}_{U}}\left(\mbox{\boldmath $X$}_{D}^{(N)},Y^{(NL+T_2)}\right)\! \leq\!  \gamma_D\right \}\label{ChangeMeasure1} \\
		&&  + \frac{P(\tilde{\mbox{\boldmath $g$}}_{U})P(\mbox{\boldmath $g$}_{U})P_{\tilde{\mbox{\scriptsize \boldmath $g$}}_{U}}\left(Y^{(NL+T_2)}\middle \vert \mbox{\boldmath $X$}_{D}^{(N)}\right)}{2^{N\sum_{[i,j]\in D}r_{g_{ij}}}\ell_{\mbox{\scriptsize \boldmath $g$}_{U}}\left(\mbox{\boldmath $X$}_{D}^{(N)},Y^{(NL+T_2)}\right)}\times \nonumber \\
		&& \ \ \ \ \ \ \ \ \ \ \ \ \ \ \ \ \ \ \ \ \ \mathbf{1}\left\{\ell_{\mbox{\scriptsize \boldmath $g$}_{U}}\left(\mbox{\boldmath $X$}_{D}^{(N)},Y^{(NL+T_2)}\right)> \gamma_D\right\}\Bigg]\label{ChangeMeasure2}\\
		&& \leq 2 \sum_{\mbox{\scriptsize \boldmath $g$}_{U}\in \mathcal{R}_{D}} \sum_{\scriptsize\begin{array}{c}\tilde{\mbox{\boldmath $g$}}_{U}\in \mathcal{R}_{Dc} \\ \tilde{\mbox{\scriptsize \boldmath $g$}}_{D}=\mbox{\scriptsize \boldmath  $g$}_{D}\end{array}} \min \Bigg\{P(\mbox{\boldmath $g$}_{U}), \nonumber \\
		&& \ \ \ \ \ \ \ \ \ \ \ \ \ P(\tilde{\mbox{\boldmath $g$}}_{U}) \mathbb{E}\Bigg[\frac{P_{\tilde{\mbox{\scriptsize \boldmath $g$}}_{U}}\left(Y^{(NL+T_2)}\middle \vert \mbox{\boldmath $X$}_{D}^{(N)}\right)}{P\left(Y^{(NL+T_2)}\middle \vert \mbox{\boldmath $X$}_{D}^{(N)},\mbox{\boldmath $g$}_{U \backslash D}\right)}\Bigg]\Bigg\},\label{FinalBound_D_Thm5}
	\end{eqnarray}
	with optimal threshold
	\begin{eqnarray}
		&& \!\!\!\! \gamma_{D}^{*}\left(\tilde{\mbox{\boldmath $g$}}_{U}, \mbox{\boldmath $X$}_{D}^{(N)},Y^{(NL+T_2)}\right) = \nonumber \\
		&& \!\!\!\! P(\tilde{\mbox{\boldmath $g$}}_{U}) 2^{-N\sum_{[i,j]\in D}r_{g_{ij}}}P_{\tilde{\mbox{\scriptsize \boldmath $g$}}_{U}}\left(Y^{(NL+T_2)}\middle \vert \mbox{\boldmath $X$}_{D}^{(N)}\right),
	\end{eqnarray}
where $P_{\tilde{\mbox{\scriptsize \boldmath $g$}}_{U}}\left(Y^{(NL+T_2)}\middle \vert \mbox{\boldmath $X$}_{D}^{(N)}\right)$ is the conditional probability of $Y^{(NL+T_2)}$ given $\mbox{\boldmath $X$}_{D}^{(N)}$ and the vector of code indices $\tilde{\mbox{\boldmath $g$}}_{U} = \left[\mbox{\boldmath $g$}_{D}, \tilde{\mbox{\boldmath $g$}}_{U \backslash D}\right]^{\top}$.

Substituting (\ref{IncDecThm5}), (\ref{CollMiss_Thm5_final}), and (\ref{FinalBound_D_Thm5}) into (\ref{GEP_D_in_Proof}) completes the proof.
\end{proof}

\end{document}